\documentclass[a4paper]{article}

\usepackage{amsmath}
\usepackage{amssymb}
\usepackage{graphicx}
\usepackage[OT4]{fontenc}
\usepackage[latin2]{inputenc}
\usepackage{tikz}
\usepackage{url}
\usepackage[numbers]{natbib}
\usepackage{bm}
\emergencystretch=20pt

\usepackage{setspace}

\newcommand{\abs}[1]{\left| #1 \right|}

\newcommand{\okra}[1]{\left( #1 \right)}
\newcommand{\kwad}[1]{\left[ #1 \right]}
\newcommand{\klam}[1]{\left\{ #1 \right\}}
\newcommand{\ceil}[1]{\left\lceil #1 \right\rceil}
\newcommand{\floor}[1]{\left\lfloor #1 \right\rfloor}

\DeclareMathOperator{\card}{card}

\DeclareMathOperator{\sred}{\mathbf{E}}

\newcommand{\boole}[1]{{\bf 1}{\klam{#1}}}

\newtheorem{definition}{Definition}

\newtheorem{proposition}{Proposition}
\newtheorem{theorem}{Theorem}

\newenvironment*{proof}{\begin{trivlist}\item[]
\noindent\textbf{Proof:}}{$\Box$\par\end{trivlist}}
\newenvironment*{proof*}[1]{\begin{trivlist}\item[]
\noindent\textbf{Proof of #1:}}{$\Box$\par\end{trivlist}}

\author{{\L}ukasz D\k{e}bowski\thanks{
    {\L}. D\k{e}bowski is with
    the Institute of Computer Science, Polish Academy of Sciences, 
    ul. Jana Kazimierza 5, 01-248 Warszawa, Poland 
    (e-mail: ldebowsk@ipipan.waw.pl).
    \newline\null $\quad$ 
    An extended abstract of this paper,
    entitled ``Regular Hilberg Processes: Nonexistence of Universal
    Redundancy Ratios'', was presented at the 8th Workshop on
    Information Theoretic Methods in Science and Engineering,
    Copenhagen, June 24-26, 2015.}  }

\title{Regular Hilberg Processes: An Example of Processes with a
  Vanishing Entropy Rate} \date{}

\begin{document}

\pagestyle{empty}   
\begin{titlepage}
\maketitle

\begin{abstract}
  A regular Hilberg process is a stationary process that satisfies
  both a hyperlogarithmic growth of maximal repetition and a power-law
  growth of topological entropy, which are a kind of dual
  conditions. The hyperlogarithmic growth of maximal repetition has
  been experimentally observed for texts in natural language, whereas
  the power-law growth of topological entropy implies a vanishing
  Shannon entropy rate and thus probably does not hold for natural
  language.  In this paper, we provide a constructive example of
  regular Hilberg processes, which we call random hierarchical
  association (RHA) processes. Our construction does not apply the
  standard cutting and stacking method. For the constructed RHA
  processes, we demonstrate that the expected length of any uniquely
  decodable code is orders of magnitude larger than the Shannon block
  entropy of the ergodic component of the RHA process. Our proposition
  supplements the classical result by Shields concerning nonexistence
  of universal redundancy rates.
%
  \\[1em]
  \textbf{Keywords}: maximal repetition, topological entropy, entropy
  rate, asymptotically mean stationary processes
\end{abstract}


\end{titlepage}
\pagestyle{plain}   


\section{Main ideas and results}
\label{secIntroduction}

Throughout this paper we identify stationary processes with their
distributions (stationary measures) and we use terms ``measure'' and
``process'' interchangeably. Consider thus a stationary measure $\mu$
on the measurable space of infinite sequences
$(\mathbb{A}^{\mathbb{N}},\mathcal{A}^{\mathbb{N}})$ from a finite
alphabet $\mathbb{A}\subset \mathbb{N}$. The random symbols will be
denoted as
$\xi_i:\mathbb{A}^{\mathbb{N}}\ni (x_i)_{i\in\mathbb{N}}\mapsto
x_i\in\mathbb{A}$,
whereas blocks of symbols will be denoted as
$x_{k:l}=(x_i)_{i=k}^l$. The expectation with respect to $\mu$ is
denoted as $\sred_\mu$. We also use shorthand
$\mu(x_{1:m})=\mu(\xi_{1:m}=x_{1:m})$. The Shannon block entropy of
measure $\mu$ is function
\begin{align}
  \label{BlockEntropy}
  H_\mu(m):=\sred_\mu \kwad{-\log \mu(\xi_{1:m})}
  ,
\end{align}
and the Shannon entropy rate of $\mu$ is the limit
\begin{align}
  \label{EntropyRate}
  h_\mu:=\inf_{m\in\mathbb{N}}\frac{H_\mu(m)}{m}=
  \lim_{m\rightarrow\infty} \frac{H_\mu(m)}{m}
  .
\end{align}

Let us introduce two functions of an individual block $\xi_{1:k}$. The
first one is the maximal repetition
\begin{align}
  L(\xi_{1:k})&:=\max \klam{m: \text{some $x_{1:m}$ is repeated in
      $\xi_{1:k}$}}
\end{align}
\cite{DeLuca99,Shields92b,Shields97,KolpakovKucherov99a,KolpakovKucherov99},
whereas the second one is the topological entropy
\begin{align}
  H_{top}(m|\xi_{1:k}):=\log \card\klam{x_{1:m}:
    \text{$x_{1:m}$ is a substring of $\xi_{1:k}$}}
  ,
\end{align}
which is the logarithm of subword complexity
\cite{JansonLonardiSzpankowski04, Ferenczi99, DeLuca99,
  GheorghiciucWard07, Ivanko08, Debowski16}.  In this paper, we are
interested in the following class of stationary processes, defined
using the Big O notation:
\begin{definition}[a variation of a definition in \cite{Debowski14f}]
  \label{defiRHP}
  A stationary measure $\mu$ on the measurable space of infinite
  sequences $(\mathbb{A}^{\mathbb{N}},\mathcal{A}^{\mathbb{N}})$ is
  called a \emph{regular Hilberg process with an exponent
    $\beta\in(0,1)$} if it satisfies conditions
\begin{align}
  \label{HilbergRepetitionAS}
  L(\xi_{1:m}) 
  &=\Theta\okra{ (\log m)^{1/\beta} }
  ,
  \\
  \label{HilbergTopEntropyAS}
  H_{top}(m|\xi_{1:\infty})
  &=\Theta\okra{ m^{\beta} }
  .
\end{align}
  $\mu$-almost surely, where the lower bound for the maximal
  repetition and the upper bound for the topological entropy are
  uniform in $\xi_{1:\infty}$.
\end{definition}

The original definition in \cite{Debowski14f} uses condition
$H_\mu(m)=\Theta\okra{ m^\beta }$ rather than
(\ref{HilbergTopEntropyAS}) and condition $\sred_\mu L(\xi_{1:m})
=\Theta\okra{ (\log m)^{1/\beta} }$ instead of
(\ref{HilbergRepetitionAS}). Condition $H_\mu(m)=\Theta\okra{ m^\beta
}$ has been originally contemplated by Hilberg \cite{Hilberg90}, hence
follows the name of the class of processes. Conditions
(\ref{HilbergRepetitionAS}) and (\ref{HilbergTopEntropyAS}) are,
however, more natural since they pertain to an individual sequence
$\xi_{1:\infty}$ and are dual in view of the following proposition:
\begin{theorem}[\cite{Debowski15f}]
  \label{theoTopEntropyRepetition}
  If $H_{top}(m|\xi_{1:k})< \log (k-m+1)$ then
  $L(\xi_{1:k})\ge m$.
\end{theorem}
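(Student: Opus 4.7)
The plan is to prove this by a direct pigeonhole argument on the starting positions of length-$m$ substrings.

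First I would unpack the two quantities. By definition, $H_{top}(m|\xi_{1:k}) = \log N$, where $N = \card\klam{x_{1:m} : x_{1:m} \text{ is a substring of } \xi_{1:k}}$ counts \emph{distinct} length-$m$ substrings. On the other hand, every length-$m$ substring of $\xi_{1:k}$ arises as $\xi_{i:i+m-1}$ for some starting position $i\in\klam{1,2,\ldots,k-m+1}$, so there are exactly $k-m+1$ such occurrences, counted with multiplicity.

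The hypothesis $H_{top}(m|\xi_{1:k}) < \log(k-m+1)$ is equivalent to $N < k-m+1$. Thus the map $i\mapsto \xi_{i:i+m-1}$ sends a set of $k-m+1$ positions into a set of fewer than $k-m+1$ words, so by pigeonhole there exist two distinct positions $i<j$ with $\xi_{i:i+m-1}=\xi_{j:j+m-1}$. Setting $x_{1:m}=\xi_{i:i+m-1}$, the word $x_{1:m}$ appears (at least) twice in $\xi_{1:k}$, which by the definition of $L$ gives $L(\xi_{1:k})\ge m$.

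There is no real obstacle here: the statement is essentially a reformulation of pigeonhole, and the only care needed is to match the conventions (substring counted once for topological entropy, versus occurrences counted with multiplicity for the starting-position enumeration, and the convention that ``repeated'' in the definition of $L$ means at least two distinct occurrences). Once those conventions are aligned, the proof is a single counting line, so I would keep the write-up to a couple of sentences.
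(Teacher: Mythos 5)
Your proposal is correct and is essentially identical to the paper's own proof: both count the $k-m+1$ occurrences of length-$m$ substrings by starting position, note that the hypothesis bounds the number of distinct such substrings by fewer than $k-m+1$, and conclude by pigeonhole that some length-$m$ word repeats, giving $L(\xi_{1:k})\ge m$. Nothing to add.
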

\begin{proof}
  String $\xi_{1:k}$ contains $k-m+1$ substrings of length $m$ (on
  overlapping positions). Among them there can be at most $\exp
  (H_{top}(m|\xi_{1:k}))$ different substrings.  Since $\exp
  (H_{top}(m|\xi_{1:k}))< k-m+1$, there must be some repeat of length
  $m$. Hence $L(\xi_{1:k})\ge m$.
\end{proof}  
In particular, since
$H_{top}(m|\xi_{1:k})\le H_{top}(m|\xi_{1:\infty})$, Theorem
\ref{theoTopEntropyRepetition} yields
\begin{align*}
  H_{top}(m|\xi_{1:\infty})
  =O\okra{ m^{\beta} }
  \Rightarrow
  L(\xi_{1:m}) 
  =\Omega\okra{ (\log m)^{1/\beta} }
  ,
  \\
  L(\xi_{1:m}) 
  =O\okra{ (\log m)^{1/\beta} }
  \Rightarrow
  H_{top}(m|\xi_{1:\infty})
  =\Omega\okra{ m^{\beta} }
  .
\end{align*}
Now we can see that the lower bound in (\ref{HilbergRepetitionAS}) is
implied by the upper bound in (\ref{HilbergTopEntropyAS}), whereas the
upper bound in (\ref{HilbergRepetitionAS}) implies the lower bound in
(\ref{HilbergTopEntropyAS}). We might therefore suppose that
conditions (\ref{HilbergRepetitionAS}) and (\ref{HilbergTopEntropyAS})
hold simultaneously indeed for some class of processes.

Why is this problem important?  In fact, according to some
experimental measurements of maximal repetition, the hyperlogarithmic
growth (\ref{HilbergRepetitionAS}) holds approximately with
$\beta\approx 0.4$ for texts in English, French, and German, where the
lower bound for the growth of maximal repetition seems uniform, i.e.,
text-independent \cite{Debowski12b,Debowski15f}.  Thus understanding
how to construct some class of processes satisfying condition
(\ref{HilbergRepetitionAS}) may contribute to an improvement in
statistical models of natural language. Although condition
$H_\mu(m)=\Theta\okra{ m^\beta }$, related to
(\ref{HilbergTopEntropyAS}), was actually considered in
\cite{Hilberg90} as a hypothesis for natural language, here we should
admit that the combination of conditions (\ref{HilbergRepetitionAS})
and (\ref{HilbergTopEntropyAS}) is likely too strong to be required
from the natural language models. As we will show, the power law
(\ref{HilbergTopEntropyAS}) implies a vanishing Shannon entropy rate,
$h_\mu=0$, whereas the overwhelming empirical evidence asserts that
the Shannon entropy rate of natural language is strictly positive,
about $1$ bit per character
\cite{Shannon51,CoverKing78,BrownOthers83,Grassberger02,BehrOthers03,TakahiraOthers16}.
Nevertheless, constructing stationary processes that satisfy the
hyperlogarithmic growth (\ref{HilbergRepetitionAS}) is nontrivial
enough, so it may be illuminating to consider first a somewhat
unrealistic class of processes that also satisfy the power law
(\ref{HilbergTopEntropyAS}).

For the regular Hilberg processes there are two general results.  As
mentioned, it can be seen easily that the power law
(\ref{HilbergTopEntropyAS}) implies a vanishing Shannon entropy rate.
\begin{theorem}
  \label{theoHilbergEntropyRate}
  We have $h_\mu=0$ for a regular Hilberg process $\mu$.
\end{theorem}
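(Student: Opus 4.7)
The plan is to bound the Shannon block entropy $H_\mu(m)$ above by $Cm^\beta$, which, together with (\ref{EntropyRate}) and $\beta < 1$, forces $h_\mu = \lim_m H_\mu(m)/m \le \lim_m C m^{\beta-1} = 0$. From the upper bound in (\ref{HilbergTopEntropyAS}) one extracts a constant $C$ and a set $B$ with $\mu(B) = 1$ such that, for every $\xi_{1:\infty} \in B$ and every $m$, the number of distinct length-$m$ substrings of $\xi_{1:\infty}$ is at most $N_m := \exp(Cm^\beta)$.

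I would first treat the ergodic case. By Birkhoff's ergodic theorem applied to the indicator that $x_{1:m}$ occurs at position $k$, every block with $\mu(x_{1:m}) > 0$ has a positive limiting frequency in $\xi_{1:\infty}$, hence appears at least once, for $\mu$-almost every $\xi$. Therefore the support $S_m := \klam{x_{1:m} : \mu(x_{1:m}) > 0}$ is almost surely contained in the (random) set of length-$m$ substrings of $\xi_{1:\infty}$, giving $\card S_m \le N_m$. The trivial bound $H_\mu(m) \le \log \card S_m \le Cm^\beta$ then closes the ergodic case.

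For a general (possibly non-ergodic) stationary $\mu$, I would pass to the ergodic decomposition $\mu = \int \mu_\omega \, d\pi(\omega)$. Since the complement of $B$ has $\mu$-measure zero, it has $\mu_\omega$-measure zero for $\pi$-a.e. $\omega$, so the same uniform topological-entropy bound (with the same constant $C$) holds $\mu_\omega$-almost surely. By the ergodic case, $h_{\mu_\omega} = 0$ for $\pi$-almost every $\omega$, and the standard affinity identity $h_\mu = \int h_{\mu_\omega} \, d\pi(\omega)$ delivers $h_\mu = 0$.

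The only slightly delicate point is the transfer of the uniformity constant $C$ to the ergodic components, but this reduces to the elementary fact that $\mu$-null sets are $\mu_\omega$-null for $\pi$-almost every $\omega$. Everything else — that topological entropy dominates Shannon block entropy on the support, and that $m^\beta/m \to 0$ — is routine, so I expect no serious obstacle.
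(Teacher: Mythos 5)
Your proposal is correct and follows essentially the same route as the paper: the paper likewise uses the ergodic theorem to show that the blocks in the support of the ergodic component $F=\mu(\cdot|\mathcal{I})$ all occur in $\xi_{1:\infty}$, so that $H_F(m)\le H_{top}(m|\xi_{1:\infty})=O(m^\beta)$, and then concludes via the affinity identity $h_\mu=\sred_\mu h_F$. Your separate treatment of the ergodic case and your remark on transferring the uniform constant to the ergodic components are just a more explicit spelling-out of the same argument.
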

\begin{proof}
  The argument involves the random ergodic measure
  $F=\mu(\cdot|\mathcal{I})$, where $\mathcal{I}$ is the
  shift-invariant algebra \cite{Kallenberg97,GrayDavisson74b}. By the
  ergodic theorem for stationary processes \cite{Kallenberg97}, we
  have $\mu$-almost surely
\begin{align}
  \label{HFHTop}
    H_{top}(m|\xi_{1:\infty})\ge \log
    \card\klam{x_{1:m}:F(x_{1:m})>0}\ge H_F(m) 
    ,
\end{align}
so $h_F=0$ follows from (\ref{HilbergTopEntropyAS}), whereas as shown
in \cite{GrayDavisson74b,Debowski09} we have
\begin{align}
  h_\mu=\sred_\mu h_F
  ,  
\end{align}
from which $h_\mu=0$ follows.
\end{proof}

Moreover, the ergodic decomposition of a regular Hilberg
process, as defined in Definition~\ref{defiRHP}, consists of ergodic
regular Hilberg processes. Namely, we have:
\begin{theorem}
  For a regular Hilberg process $\mu$ with exponent $\beta$, the
  random ergodic measure $F=\mu(\cdot|\mathcal{I})$, where
  $\mathcal{I}$ is the shift-invariant algebra, $\mu$-almost surely constitutes
  an ergodic regular Hilberg process with exponent $\beta$.
\end{theorem}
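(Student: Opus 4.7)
The plan is to exploit the fact that conditions (\ref{HilbergRepetitionAS}) and (\ref{HilbergTopEntropyAS}) are expressed as $\mu$-almost sure properties of individual sample paths, and therefore descend automatically to the random ergodic component $F = \mu(\cdot\mid\mathcal{I})$. The key tool is the standard identity $F(A) = \sred_\mu[\boole{A}\mid \mathcal{I}]$, which yields $\sred_\mu F(A) = \mu(A)$; hence $\mu(A) = 1$ forces $F(A) = 1$ $\mu$-almost surely.

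First, I would recall, as already invoked in the proof of Theorem~\ref{theoHilbergEntropyRate}, that the ergodic decomposition theorem \cite{Kallenberg97,GrayDavisson74b} provides a regular version of $F$ that is $\mu$-almost surely a stationary ergodic probability measure on $(\mathbb{A}^{\mathbb{N}},\mathcal{A}^{\mathbb{N}})$. Consequently, ergodicity of $F$ is immediate, and only the two Hilberg conditions remain to be transferred.

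Next, let $c_1, C_2 > 0$ be the universal constants witnessing the uniformity clause of Definition~\ref{defiRHP} for $\mu$, and introduce the four measurable events
\begin{align*}
  B_1 &= \bigcup_{M \in \mathbb{N}} \klam{\xi: L(\xi_{1:m}) \ge c_1 (\log m)^{1/\beta} \text{ for all } m \ge M},\\
  B_1' &= \bigcup_{c,M \in \mathbb{N}} \klam{\xi: L(\xi_{1:m}) \le c (\log m)^{1/\beta} \text{ for all } m \ge M},\\
  B_2 &= \bigcup_{M \in \mathbb{N}} \klam{\xi: H_{top}(m\mid \xi_{1:\infty}) \le C_2 m^\beta \text{ for all } m \ge M},\\
  B_2' &= \bigcup_{C,M \in \mathbb{N}} \klam{\xi: C \cdot H_{top}(m\mid \xi_{1:\infty}) \ge m^\beta \text{ for all } m \ge M}.
\end{align*}
Each has $\mu$-measure one by Definition~\ref{defiRHP}, so applying $F(A) = \sred_\mu[\boole{A}\mid \mathcal{I}]$ to each and discarding the union of the four resulting $\mu$-null sets gives $F(B_1 \cap B_1' \cap B_2 \cap B_2') = 1$ together with ergodicity of $F$. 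This is precisely the statement that $F$ is a regular Hilberg process with exponent $\beta$, and moreover with the same universal constants $c_1$ and $C_2$ that witnessed the uniformity for $\mu$.

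There is no substantive obstacle: the whole content of the claim is that Definition~\ref{defiRHP} is phrased as a sample-path property and therefore is preserved by the ergodic decomposition $\mu = \sred_\mu F$. The only mildly delicate point is ensuring that the uniform bounds descend, which succeeds because the constants $c_1$ and $C_2$ are already built into the defining events $B_1$ and $B_2$ — no $F$-dependent constants need to be introduced.
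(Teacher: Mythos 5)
Your proposal is correct and follows essentially the same route as the paper: the paper's proof is the two-line observation that $\mu=\int F\,d\mu$ forces every $\mu$-full event to be $F$-full $\mu$-almost surely, which is exactly your identity $\sred_\mu F(A)=\mu(A)$. Your version merely makes explicit the four measurable events encoding the two Hilberg conditions and the preservation of the uniform constants, which the paper leaves implicit.
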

\begin{proof}
  We have $\mu=\int F d\mu$. Hence every event of full measure $\mu$
  must be $\mu$-almost surely an event of full measure $F$. This
  implies the claim.
\end{proof}
We suppose that the above property is not true for the original
definition of a regular Hilberg process given in article
\cite{Debowski14f}, but we do not investigate this problem in this
paper.

We will present now some constructive example of regular Hilberg
processes. The example will be called random hierarchical association
(RHA) processes.  The RHA processes are parameterized by certain free
parameters which we will call perplexities (a name borrowed from
computational linguistics). Approximately, perplexity $k_n$ is the
number of distinct blocks of length $2^n$ that appear in the process
realization. Exactly in this meaning, term ``perplexity'' is used in
computational linguistics. It turns out that controlling perplexities,
we can control the value of the Shannon block entropy and force the
Shannon entropy rate to be zero. It turns out as well that we can
control the value of the topological entropy and the maximal
repetition. In this way we can construct a stationary process
exhibiting quite an arbitrary desired growth of the topological
entropy and the maximal repetition, such a regular Hilberg process.

We have invented the RHA processes as a construction unrelated to the
cutting and stacking method \cite{Shields91b}, used for constructing
stationary processes with certain desired properties. The cutting and
stacking method seems more abstract and more general than the RHA
process method. Certainly, these two methods adopt very different
strategies. The cutting and stacking method, being a tool borrowed
from ergodic theory, approximates the constructed process by an
abstract dynamical system. This dynamical system consists of the
Lebesgue measure on the unit interval with an incrementally
constructed partition and transformation. In contrast, the RHA process
method begins with some nonstationary nonergodic process from which we
obtain a given stationary ergodic measure by taking the stationary
mean and ergodic decomposition. For our particular application of
constructing regular Hilberg processes, the RHA process method is
sufficient and seems natural enough but it is likely insufficient for
constructing processes which satisfy condition
(\ref{HilbergRepetitionAS}) without condition
(\ref{HilbergTopEntropyAS}). In the later case, being the case of
interest for modeling natural language, using the cutting and stacking
method is a certain idea but we have not figured out yet how to
implement it exactly.

To briefly explain our method, the RHA processes are formed in two not
so complicated steps.  First, we sample recursively random pools of
$k_n$ distinct blocks of length $2^n$, which are formed by
concatenation of randomly selected $k_n$ pairs chosen from $k_{n-1}$
distinct blocks of length $2^{n-1}$ sampled in a previous step (the
recursion stops at blocks of length $1$, which are fixed
symbols). Second, we obtain an infinite sequence of random symbols by
concatenating blocks of lengths $2^0$, $2^1$, $2^2$, $...$ randomly
chosen from the respective pools.  As a result there cannot be more
that $k_n^2$ distinct blocks of length $2^n$ that appear the final
process realization. The selection of these blocks is, however, random
and we do not know them a priori. This is some reason why the
constructed process satisfies conditions similar to
(\ref{HilbergRepetitionAS}) and (\ref{HilbergTopEntropyAS})
simultaneously but is nonergodic.

Now we will write down this construction using symbols.

\textbf{Step 1:} Formally, let perplexities
$(k_n)_{n\in\klam{0}\cup\mathbb{N}}$ be some sequence of strictly
positive natural numbers that satisfy
\begin{align}
  \label{LogSubadditivity}
  k_{n-1}\le k_n \le k_{n-1}^2
  .
\end{align}
Next, for each $n\in\mathbb{N}$, let
$(L_{nj},R_{nj})_{j\in\klam{1,...,k_{n}}}$ be an independent random
combination of $k_n$ pairs of numbers from the set
$\klam{1,...,k_{n-1}}$ drawn without repetition. That is, we assume
that each pair $(L_{nj},R_{nj})$ is different, the elements of pairs
may be identical ($L_{nj}=R_{nj}$), and the sequence
$(L_{nj},R_{nj})_{j\in\klam{1,...,k_{n}}}$ is sorted
lexicographically. Formally, we assume that random variables $L_{nj}$
and $R_{nj}$ are supported on some probability space
$(\Omega,\mathcal{J},P)$ and have the uniform distribution
\begin{align}
  &P((L_{n1},R_{n1},...,L_{nk_{n}},R_{nk_{n}})=(l_{n1},r_{n1},...,l_{nk_{n}},r_{nk_{n}}))
  \nonumber\\
  &\phantom{==================}=
  \binom{k_{n-1}^2}{k_n}^{-1}
  .
\end{align}
Subsequently we define random variables
\begin{align}
  Y^0_j&=j, & j&\in \klam{1,...,k_{0}},\\
  Y^n_j&=Y^{n-1}_{L_{nj}}\times Y^{n-1}_{R_{nj}}, & j&\in
  \klam{1,...,k_{n}}, n\in\mathbb{N}
  ,
\end{align}
where $a\times b$ denotes concatenation. Hence $Y^n_j$ are $k_n$
distinct blocks of $2^n$ natural numbers, selected by some sort of
random hierarchical concatenation.

\textbf{Step 2:} Variables $Y^n_j$ will be the building blocks of yet
another process.  Let $(C_{n})_{n\in\klam{0}\cup\mathbb{N}}$ be
independent random variables, independent from
$(L_{nj},R_{nj})_{n\in\mathbb{N},j\in\klam{1,...,k_{n}}}$, with
uniform distribution
\begin{align}
  P(C_{n}=j)&=1/k_{n},  &j&\in\klam{1,...,k_{n}}
  .
\end{align}
\begin{definition}
  \label{defiRHA}
  The \emph{random hierarchical association (RHA) process
    $\mathcal{X}$ with perplexities $(k_n)_{n\in
      \klam{0}\cup\mathbb{N}}$} is defined as
\begin{align}
  \mathcal{X}=Y^0_{C_0}\times Y^1_{C_1}\times Y^2_{C_2}\times ...\, .
\end{align}
\end{definition}
This completes the construction of the RHA processes but it is not the
end of our discussion of these processes.

It is convenient to define a few more random variables for the RHA
process.  First, sequence $\mathcal{X}$ will be parsed into a sequence
of numbers $X_j$, where
\begin{align}
  \label{ParsingOfRHA}
  \mathcal{X}=X_1\times X_2\times X_3\times ... 
  ,
\end{align}
and, second, we denote blocks starting at any position as
\begin{align}
  \label{BlocksOfRHA}
  X_{k:l}=X_k\times X_{k+1}\times... \times X_l 
  .
\end{align}

The RHA processes defined in Definition \ref{defiRHA} are not
stationary but they possess a stationary mean, which is a condition
related to asymptotic mean stationarity.  Let us introduce shift
operation $T:\mathbb{A}^{\mathbb{N}}\ni (x_i)_{i\in\mathbb{N}}\mapsto
(x_{i+1})_{i\in\mathbb{N}}\in\mathbb{A}^{\mathbb{N}}$.  We recall this
definition:
\begin{definition}
  A measure $\nu$ on
  $(\mathbb{A}^{\mathbb{N}},\mathcal{A}^{\mathbb{N}})$ is called
  \emph{asymptotically mean stationary (AMS)} if limits
  \begin{align}
    \mu(A):=
    \lim_{N\rightarrow\infty}\frac{1}{N}
    \sum_{i=1}^{N}\nu(T^{-i}A)
  \end{align}
  exist for every event $A\in\mathcal{A}^{\mathbb{N}}$ \cite{GrayKieffer80}.
\end{definition}
For an AMS measure $\nu$, function $\mu$ is a stationary measure on
$(\mathbb{A}^{\mathbb{N}},\mathcal{A}^{\mathbb{N}})$, called the
stationary mean of $\nu$. Moreover, measures $\mu$ and $\nu$ are equal
on the shift invariant algebra
$\mathcal{I}=\klam{A\in\mathcal{A}^{\mathbb{N}}: T^{-1}A =A}$, i.e.,
$\mu(A)=\nu(A)$ for all $A\in\mathcal{I}$.

Now, let $\mathbb{A}^{+}=\bigcup_{n\in\mathbb{N}} \mathbb{A}^{n}$.
There is a related relaxed condition of asymptotic mean stationarity:
\begin{definition}
  A measure $\nu$ on
  $(\mathbb{A}^{\mathbb{N}},\mathcal{A}^{\mathbb{N}})$ is called
  \emph{pseudo-asymptotically mean stationary (pseudo-AMS)}
  if limits
  \begin{align}
    \mu(x_{1:m}):=
    \lim_{N\rightarrow\infty}\frac{1}{N}
    \sum_{i=1}^{N}\nu(\xi_{i:i+m-1}=x_{1:m})
  \end{align}
  exist for every block $x_{1:m}\in\mathbb{A}^{+}$.
\end{definition}
For a pseudo-AMS measure $\nu$ over a finite alphabet $\mathbb{A}$,
function $\mu$, extended via $\mu(\xi_{1:m}=x_{1:m}):=\mu(x_{1:m})$,
is also a stationary measure on
$(\mathbb{A}^{\mathbb{N}},\mathcal{A}^{\mathbb{N}})$.  We shall
continue to call this $\mu$ a stationary mean of $\nu$. However, a
pseudo-AMS measure need not be AMS in general, cf.\ \cite[Remark in
the proof of Lemma 7.16]{Gray09} and \cite[Example
6.3]{Debowski10}. In particular, for a pseudo-AMS measure $\nu$ we
need not have $\mu(A)=\nu(A)$ for shift invariant events
$A\in\mathcal{I}$.

It turns out that the RHA processes are pseudo-AMS.
\begin{theorem}
  \label{theoAMS}
  The RHA processes are pseudo-AMS. In particular, for $m\le 2^n$ and
  $k\in\mathbb{N}$, the stationary mean is
  \begin{align}
    \mu(x_{1:m})= \frac{1}{2^n}\sum_{j=0}^{2^n-1}
    P(X_{k2^n+j:k2^n+j+m-1}=x_{1:m}) .
  \end{align}
\end{theorem}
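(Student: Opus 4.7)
The plan is to reduce pseudo-AMS to a single hierarchical invariance statement about the RHA construction. Set $p_i:=P(X_{i:i+m-1}=x_{1:m})$, let $A(k,n):=\frac{1}{2^n}\sum_{j=0}^{2^n-1}p_{k2^n+j}$ denote the right-hand side of the formula, and introduce the level-$n$ chunks $B_k:=X_{k2^n:(k+1)2^n-1}$ for $k\ge 1$. A window of length $m\le 2^n$ starting inside $B_k$ lies within $B_k\times B_{k+1}$, so each summand of $A(k,n)$ is a functional of the ordered pair $(B_k,B_{k+1})$ alone; hence it suffices to show that the joint law of $(B_k,B_{k+1})$ does not depend on $k\ge 1$. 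The claim I would prove is
\[
(B_k,B_{k+1})\stackrel{d}{=}(Y^n_{J_1},Y^n_{J_2}),
\]
with $(J_1,J_2)$ uniform on $\{1,\ldots,k_n\}^2$ and independent of $(Y^n_j)_{j=1}^{k_n}$.

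Granted this invariance, $A(k,n)$ is a single constant $c=c(x_{1:m})$ in $k\ge 1$, and the decomposition $A(1,n+1)=\tfrac12 A(2,n)+\tfrac12 A(3,n)$ (valid since $m\le 2^n$ implies $m\le 2^{n+1}$) further shows that $c$ does not depend on $n$ either. Pseudo-AMS then follows from a Cesaro argument: for $N=K\cdot 2^n-1$ one has $\sum_{i=1}^N p_i=(K-1)\cdot 2^n\cdot c+O(2^n)$, the error term absorbing the initial $2^n-1$ summands (which satisfy $p_i\in[0,1]$), so $\frac{1}{N}\sum_{i=1}^N p_i\to c$ as $K\to\infty$, and interpolating between consecutive multiples of $2^n$ handles general $N$. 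The common limit then equals $\mu(x_{1:m})=c=A(k,n)$ for every admissible pair $k,n$, which is exactly the stated formula.

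The invariance claim is proved by tracing $B_k$ and $B_{k+1}$ through the binary hierarchy. Writing $N:=\lfloor\log_2(k2^n)\rfloor$, two cases arise. (i) If $B_k$ and $B_{k+1}$ share a common ancestor at some level $n+r$ inside $Y^N_{C_N}$, then $J_1$ sits at the end of a pure $R$-chain and $J_2$ at the end of a pure $L$-chain, both of length $r-1$, starting from the two children $(L_{n+r,J^{(r)}},R_{n+r,J^{(r)}})$ of that ancestor; here $J^{(r)}$ is itself obtained from $C_N$ by $L/R$ operations at levels $>n+r$, and is therefore uniform on $\{1,\ldots,k_{n+r}\}$ and independent of the level-$(n+r)$ variables $(L_{n+r,j},R_{n+r,j})_j$. (ii) If $k+1=2^{N+1-n}$, then $B_k,B_{k+1}$ sit in the independent trees $Y^N_{C_N}$ and $Y^{N+1}_{C_{N+1}}$; $J_1$ is an $R$-chain of length $N-n$ seeded from $C_N$, while $J_2$ is an $L$-chain of length $N-n$ seeded from $L_{N+1,C_{N+1}}$, the two seeds being independent uniforms on $\{1,\ldots,k_N\}$.

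The whole argument then rests on a one-level combinatorial lemma: because $\{(L_{n,j},R_{n,j})\}_j$ is a uniformly random size-$k_n$ subset of $\{1,\ldots,k_{n-1}\}^2$, whenever $U_1,U_2$ are iid uniform on $\{1,\ldots,k_n\}$ and independent of this subset, both the aligned pair $(L_{n,U_1},R_{n,U_1})$ and the crossed pair $(R_{n,U_1},L_{n,U_2})$ are uniform on $\{1,\ldots,k_{n-1}\}^2$, each identity being a short count over the random subset. The aligned statement converts the two children of the LCA in case (i) into an iid uniform pair at level $n+r-1$, and then iterating the crossed statement peels the $R$- and $L$-chains in parallel, producing iid uniform pairs at every lower level and ultimately delivering the asserted $(J_1,J_2)$ at level $n$, independent of the level-$\le n$ randomness that builds $(Y^n_j)$. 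In case (ii) the seeds are already an iid uniform pair, so only the crossed step is iterated. The main obstacle is purely bookkeeping: verifying that the internal-LCA case and the tree-boundary case both feed correctly into this iteration; the combinatorial content itself is elementary.
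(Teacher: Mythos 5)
Your proposal is correct and follows essentially the same route as the paper: your invariance claim that $(B_k,B_{k+1})$ has the law of $(Y^n_{J_1},Y^n_{J_2})$ with $(J_1,J_2)$ uniform on $\klam{1,\ldots,k_n}^2$ and independent of the pool is exactly the paper's Proposition~\ref{theoKpairs} on the pairs $(K_{nk},K_{n,k+1})$, and your ``aligned'' and ``crossed'' one-level counts are precisely its two inductive steps. The subsequent reduction of each length-$m$ window to a functional of $X^n_{k:k+1}$ and the periodicity/Ces\`aro argument likewise match the paper's proof of Theorem~\ref{theoAMS} (which you in fact spell out in more detail than the paper does).
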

The proof of Theorem \ref{theoAMS} will be presented later in this
article.

We suppose that the RHA processes are also AMS but we could not prove
it so far. However, we have been able to show that certain RHA
processes give rise to regular Hilberg processes:
\begin{theorem}
  \label{theoMain}
  For perplexities 
  \begin{align}
    \label{HilbergPerplexity}
    k_n=\floor{\exp\okra{2^{\beta n}}}
    ,
  \end{align}
  where $0< \beta< 1$, the stationary mean $\mu$ of the RHA process 
  satisfies the following conditions:
  \begin{enumerate}
  \item The Shannon entropy rate is $h_\mu=0$.
  \item The Shannon block entropy is sandwiched by
    \begin{align}
      \label{HilbergPerplexityEntropy}
      \frac{C_1 m}{(\log m)^{\alpha}}
      \le H_{\mu}(m)\le
      C_2 m\okra{\frac{\log\log m}{\log m}}^{\alpha}
      ,
    \end{align}
    where $\alpha=1/\beta-1$.
  \item The stationary mean $\mu$ is a regular Hilberg process with
    exponent $\beta$.
  \item The stationary mean $\mu$ is nonergodic and the Shannon
    entropy of the shift invariant algebra $H_\mu(\mathcal{I})$, as
    defined in \cite{Debowski09}, is infinite.
  \end{enumerate}
\end{theorem}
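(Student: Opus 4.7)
The plan is to establish the four claims by exploiting the hierarchical structure of the construction and Theorem \ref{theoAMS}. Claim (i) is immediate from claim (iii) via Theorem \ref{theoHilbergEntropyRate}, so the substantive work lies in (ii), (iii), and (iv). For (iii) I will first verify the topological entropy bounds and then transfer them to the maximal repetition bounds.

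For the topological entropy upper bound I will argue at a fixed realization of the hierarchy. For $m$ with $2^{s-1} < m \le 2^s$, any length-$m$ substring of $\xi_{1:\infty}$ sits within at most two consecutive level-$s$ sub-blocks of some ambient $Y^n_{C_n}$. Since the level-$s$ pool has only $k_s$ elements, there are at most $k_s^2$ ordered pairs of consecutive level-$s$ blocks and at most $m$ offsets, so the number of distinct substrings is $O(k_s^2 \cdot m)$, yielding $H_{top}(m|\xi_{1:\infty}) \le 2\log k_s + O(\log m) = O(m^\beta)$ uniformly in $\xi_{1:\infty}$, since $\log k_s \approx 2^{\beta s} \approx m^\beta$. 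For the matching lower bound I would use a coupon-collector argument: because $k_{s+1}/k_s$ grows doubly exponentially, almost surely every element of the level-$s$ pool appears as a substring of $\xi_{1:\infty}$ for each $s$, so $H_{top}(2^s|\xi_{1:\infty}) \ge \log k_s$. Theorem \ref{theoTopEntropyRepetition} then immediately yields the uniform lower bound $L(\xi_{1:m}) = \Omega((\log m)^{1/\beta})$.

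The upper bound $L(\xi_{1:m}) = O((\log m)^{1/\beta})$ and the precise block entropy bounds are the main obstacles. For the maximal repetition, a length-$M$ repeat forces two long stretches of level-$s$ blocks (with $s = \lceil \log_2 M \rceil$) to coincide; since the level-$(s+1)$ pairs are drawn uniformly without repetition from $\{1,\dots,k_s\}^2$, a union bound over pairs of positions combined with Borel-Cantelli should suppress such coincidences for $M \gg (\log m)^{1/\beta}$. For (ii), the block entropy is computed from the mixture formula in Theorem \ref{theoAMS}: because the stationary mean mixes over both the random hierarchy and the random offset, $H_\mu(m)$ substantially exceeds the topological-entropy scale $m^\beta$. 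A careful accounting of the per-level entropy contributions $\log k_s \sim 2^{\beta s}$ across the roughly $\log_2 m$ relevant levels, combined with the near-uniform marginals of the pool selections, should yield both $H_\mu(m) \ge C_1 m/(\log m)^\alpha$ and $H_\mu(m) \le C_2 m(\log\log m/\log m)^\alpha$. The precise polylogarithmic bookkeeping is the delicate part.

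Finally, (iv) is the cleanest. For $\mu$-almost every $\xi_{1:\infty}$, the entire level-$s$ pool is recovered as the set of distinct length-$2^s$ substrings of $\xi_{1:\infty}$, by the same coupon-collector argument used above. Hence $\mathcal{I}$ encodes each pool up to a null set, and since the pair tuple $(L_{s\cdot},R_{s\cdot})$ has entropy $\log\binom{k_{s-1}^2}{k_s}$, which diverges in $s$, we conclude $H_\mu(\mathcal{I}) = \infty$. Nonergodicity follows immediately.
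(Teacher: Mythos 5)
There is a genuine gap, in fact several. The most serious is that your proof of claims (ii) and (iv) is not a proof but a statement of intent. For (ii), ``a careful accounting of the per-level entropy contributions'' does not identify the mechanism that actually produces the two bounds. The paper's upper bound comes from a two-part code, $H(X^n_j)\le H(\mathcal{G}_{\le l})+2^{n-l}\log k_l$ (describe the pool of level-$l$ blocks, then the $2^{n-l}$ indices into it), optimized over the cut level $l$; the lower bound comes from $H(X^n_j)\ge I(X^n_j;\mathcal{G}_{\le l}\,|\,\mathcal{G}_{\le l-1})$, which is large because, on the no-repeat event $A_{nl}$, a single block reveals $2^{n-l}$ of the $k_l$ selected pairs of the pool. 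Neither idea appears in your sketch, and the lower bound in particular cannot be obtained from ``near-uniform marginals'': the marginal entropy visible at level $l$ is only $\log k_l$ per block, and the bulk of $H_\mu(m)$ lives in the randomness of the hierarchy itself. For (iv), your argument has a second, independent flaw: you establish (at best) a $P$-almost-sure property of $\mathcal{X}$ and then treat it as a $\mu$-almost-sure property of $\xi_{1:\infty}$. The process is only pseudo-AMS, and the paper explicitly warns that $\mu$ and the law of $\mathcal{X}$ need not agree on the invariant algebra $\mathcal{I}$; the only licensed transfers are Propositions \ref{theoBlockEntropyAMS} and \ref{theoPhiAMS} (block entropies and expectations of increasing functions). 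The paper instead gets $H_\mu(\mathcal{I})=\infty$ from the gap $H_\mu(m)-\sred_\mu H_F(m)\ge C_1 m/(\log m)^{\alpha}-O(m^{\beta})\to\infty$, using $H_F(m)\le H_{top}(m|\xi_{1:\infty})$ --- which requires claim (ii), the very part you left open.

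Two further points. First, your coupon-collector lower bound for $H_{top}$ is both unproven and unnecessary. Unproven, because the level-$s$ indices $K_{sj}$ are generated hierarchically and are only shown to be pairwise uniform (Proposition \ref{theoKpairs}); establishing that the whole pool $\klam{Y^s_1,\dots,Y^s_{k_s}}$ is exhausted requires propagating ``fullness'' down from arbitrarily high levels, an induction with no base case, plus the same $P$-to-$\mu$ transfer problem. Unnecessary, because once you have the repetition upper bound $L(\xi_{1:m})=O\okra{(\log m)^{1/\beta}}$, Theorem \ref{theoTopEntropyRepetition} hands you $H_{top}(m|\xi_{1:\infty})=\Omega(m^{\beta})$ for free; the paper closes the cycle of four bounds using only the duality and the no-repeat events. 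Second, your repetition upper bound via ``a union bound over pairs of positions'' must confront the lack of independence among the $K_{lj}$; the paper instead computes the no-repeat probabilities $P(A_{nl})$ exactly as a product (Proposition \ref{theoUpperRepetition}), shows $1-P(A_{n+1,l})$ is summable for $l=\ceil{\beta^{-1}\log_2(2n)}$, and applies Borel--Cantelli together with Proposition \ref{theoPhiAMS}. Your overall architecture (bound $H_{top}$ from above by $2\log k_s$, dualize to $L$, handle repeats at the aligned block level) matches the paper's, but the quantitative core --- the no-repeat probabilities and the two entropy bounds built on the pool $\mathcal{G}_{\le l}$ --- is missing.
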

The proof of Theorem \ref{theoMain}, which we consider the main result
of this paper, will be postponed, as well. Although claim (i) follows
from claim (iii) by Theorem \ref{theoHilbergEntropyRate}, it will be
established using a different method, of an independent interest.

Theorem \ref{theoMain} has some implications for universal coding.
For a uniquely decodable code $C$, we denote its length for block
$\xi_{1:m}$ as $\abs{C(\xi_{1:m})}$. We recall that
$\sred_\mu\abs{C(\xi_{1:m})}\ge H_\mu(m)$, so the Shannon block
entropy provides a lower bound for compression of a stochastic
process.  In contrast, a code $C$ is called universal if
\begin{align}
  \label{Universal}
  \lim_{m\rightarrow\infty}\frac{\abs{C(\xi_{1:m})}}{m}=h_\mu
\end{align}
holds almost surely for every stationary ergodic measure
$\mu$. Universal codes exist and the Lempel-Ziv code
\cite{ZivLempel77} is some example of such a code. The convergence
rate for universal codes can be arbitrarily slow, however.  Shields
\cite{Shields93} showed that for any uniquely decodable code $C$ and
any sublinear function $\rho(m)=o(m)$ there exists such an ergodic
source $\mu$ that
\begin{align}
  \limsup_{m\rightarrow\infty}[\sred_\mu \abs{C(\xi_{1:m})}-H_\mu(m)-\rho(m)]>0.
\end{align}

Whereas Shields' result concerns nonexistence of a universal sublinear
bound for the difference $\abs{C(\xi_{1:m})}-H_\mu(m)$, some way of
supplementing it is to investigate ratio
$\abs{C(\xi_{1:m})}/H_\mu(m)$. Although this ratio is asymptotically
equal to $1$ for universal codes and processes with a positive Shannon
entropy rate $h_\mu>0$, Shields' result does not predict how the ratio
behaves for processes with a vanishing Shannon entropy rate $h_\mu=0$.
In fact, for the Lempel-Ziv code and ergodic regular Hilberg
processes, there is no essentially sublinear bound for the ratio
$\abs{C(\xi_{1:m})}/H_\mu(m)$:
\begin{theorem}
  \label{theoLZRatio}
  Let $C$ be the Lempel-Ziv code. For an ergodic regular Hilberg
  process $\mu$ with exponent $\beta$, $\mu$-almost surely
  \begin{align}
    \label{LZRatio}
    \frac{\abs{C(\xi_{1:m})}}{H_\mu(m)}=\Omega\okra{
      \frac{m^{1-\beta}}{(\log m)^{1/\beta-1}} }
    .
  \end{align}
\end{theorem}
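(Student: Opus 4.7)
The plan is to bound the numerator and the denominator of the ratio separately, using the two defining conditions of a regular Hilberg process. The upper bound on $L(\xi_{1:m})$ from (\ref{HilbergRepetitionAS}) will give a lower bound on $\abs{C(\xi_{1:m})}$ via a phrase-counting argument for Lempel-Ziv, whereas the upper bound on $H_{top}(m|\xi_{1:\infty})$ from (\ref{HilbergTopEntropyAS}), combined with ergodicity, will give an upper bound on $H_\mu(m)$. Their quotient yields exactly the advertised rate.

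For the first step, let $c(m)$ denote the number of phrases produced by the Lempel-Ziv parse of $\xi_{1:m}$. In the scheme of \cite{ZivLempel77}, every phrase (apart from possibly a trailing symbol) is a substring that has already appeared earlier in $\xi_{1:m}$, hence has length at most $L(\xi_{1:m})$. Since the phrases partition $\xi_{1:m}$, this gives $c(m)\ge m/(L(\xi_{1:m})+1)$, and (\ref{HilbergRepetitionAS}) then forces $c(m)=\Omega\okra{m/(\log m)^{1/\beta}}$ almost surely. A standard pointer-counting argument (each phrase is described by a pointer into a dictionary of size at most $c(m)$, hence requires $\Omega(\log c(m))$ bits) yields $\abs{C(\xi_{1:m})}=\Omega\okra{c(m)\log c(m)}$. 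Because $\log c(m)\sim \log m$ under our lower bound on $c(m)$, this gives
\[
\abs{C(\xi_{1:m})}=\Omega\okra{\frac{m\log m}{(\log m)^{1/\beta}}}=\Omega\okra{\frac{m}{(\log m)^{1/\beta-1}}}.
\]

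For the second step, ergodicity of $\mu$ makes the shift-invariant algebra $\mathcal{I}$ trivial, so the random ergodic measure $F=\mu(\cdot|\mathcal{I})$ coincides with $\mu$ almost surely. The chain (\ref{HFHTop}) from the proof of Theorem \ref{theoHilbergEntropyRate} then specializes to $H_\mu(m)\le H_{top}(m|\xi_{1:\infty})$ almost surely, and the uniform upper bound in (\ref{HilbergTopEntropyAS}) gives $H_\mu(m)=O(m^\beta)$. Dividing the two displays delivers
\[
\frac{\abs{C(\xi_{1:m})}}{H_\mu(m)}=\Omega\okra{\frac{m^{1-\beta}}{(\log m)^{1/\beta-1}}},
\]
as claimed.

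The one delicate ingredient is the lower bound $\abs{C(\xi_{1:m})}=\Omega\okra{c(m)\log c(m)}$ for the specific Lempel-Ziv variant under consideration: for LZ78 it follows cleanly from the need to encode distinct dictionary indices, and for the sliding-window LZ77 it holds a fortiori since offsets into a window of length $m$ already cost $\Omega(\log m)$ bits each. Once a single variant is fixed and the corresponding classical estimate is invoked, the rest of the proof is the elementary combination sketched above, with ergodicity of $\mu$ doing the only additional work by collapsing $F$ onto $\mu$.
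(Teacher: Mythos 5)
Your proposal is correct and follows essentially the same route as the paper's own proof: the phrase count is bounded below by $m/(L(\xi_{1:m})+1)$, the code length by (number of phrases)$\,\times\,\log($number of phrases$)$, and ergodicity collapses $F$ onto $\mu$ so that $H_\mu(m)\le H_{top}(m|\xi_{1:\infty})=O(m^\beta)$. The only cosmetic difference is that you spell out the pointer-counting justification for the $V\log V$ lower bound, which the paper simply cites as a known observation.
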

\begin{proof} By ergodicity, we have $\mu=F$. Thus, by (\ref{HFHTop})
  and (\ref{HilbergTopEntropyAS}), we obtain
\begin{align}
  H_\mu(m)=H_F(m)\le H_{top}(m|\xi_{1:\infty}) =O\okra{m^\beta} 
  .
\end{align}
On the other hand, the length of the Lempel-Ziv code
$\abs{C(\xi_{1:m})}$ for a block $\xi_{1:m}$, by
(\ref{HilbergRepetitionAS}), $\mu$-almost surely satisfies
\begin{align}
  \label{LZCompressionBound}
  \abs{C(\xi_{1:m})}&\ge \frac{m}{L(\xi_{1:m})+1}\log
  \frac{m}{L(\xi_{1:m})+1}
  \nonumber\\
  &= \Omega\okra{
  \frac{m}{(\log m)^{1/\beta-1}} } 
  .
\end{align}
The first inequality in (\ref{LZCompressionBound}) stems from a simple
observation in \cite{Debowski14f} that the length of the Lempel-Ziv
code is greater than $V\log V$, where $V$ is the number of Lempel-Ziv
phrases, whereas the Lempel-Ziv phrases may not be longer than the
maximal repetition plus $1$.
\end{proof}

A somewhat more general result holds for the RHA processes from
Theorem~\ref{theoMain}. In this case, we may replace the Lempel-Ziv
code with an arbitrary uniquely decodable code:
\begin{theorem}
  \label{theoRatio}
  Let $C$ be an arbitrary uniquely decodable code. For the stationary
  mean $\mu$ of the RHA process with perplexities
  (\ref{HilbergPerplexity}) and its random ergodic measure
  $F=\mu(\cdot|\mathcal{I})$, we have
  \begin{align}
    \label{RRatio}
    \sred_\mu\frac{\sred_F \abs{C(\xi_{1:m})}}{H_F(m)}
    &=\Omega\okra{\frac{m^{1-\beta}}{(\log m)^{1/\beta-1}}}
    ,
  \end{align}
\end{theorem}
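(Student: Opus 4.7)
The plan is to combine three ingredients, all already at our disposal: the ergodic decomposition identity $\mu=\int F\,d\mu$, the Shannon--Kraft lower bound on the expected codelength, and the sandwich estimate on $H_\mu(m)$ from Theorem~\ref{theoMain}(ii), plus a uniform deterministic upper bound on $H_F(m)$ coming from~(\ref{HFHTop}) and Definition~\ref{defiRHP}. The whole argument amounts to two substitutions followed by one averaging step, so once it is set up correctly the calculation is essentially forced.

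First I would fix the uniform denominator bound. The ``uniform in $\xi_{1:\infty}$'' clause in Definition~\ref{defiRHP} supplies a constant $K$ with $H_{top}(m|\xi_{1:\infty})\le K m^\beta$ almost surely, and the chain of inequalities
\begin{align*}
  H_F(m)\le \log\card\klam{x_{1:m}: F(x_{1:m})>0}\le H_{top}(m|\xi_{1:\infty})
\end{align*}
already displayed in~(\ref{HFHTop}) shows $H_F(m)\le K m^\beta$ for $\mu$-almost every realization. This lets me bound the integrand pointwise below:
\begin{align*}
  \frac{\sred_F \abs{C(\xi_{1:m})}}{H_F(m)}\ge \frac{\sred_F \abs{C(\xi_{1:m})}}{K m^\beta}.
\end{align*}

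Second I would take $\sred_\mu$ of both sides and use the ergodic decomposition identity $\sred_\mu\sred_F\abs{C(\xi_{1:m})}=\sred_\mu\abs{C(\xi_{1:m})}$, obtaining
\begin{align*}
  \sred_\mu\frac{\sred_F \abs{C(\xi_{1:m})}}{H_F(m)} \ge \frac{\sred_\mu \abs{C(\xi_{1:m})}}{K m^\beta}.
\end{align*}
Third, since $C$ is uniquely decodable, Kraft's inequality applied under $\mu$ yields the standard source coding bound $\sred_\mu\abs{C(\xi_{1:m})}\ge H_\mu(m)$ (in natural units, up to the usual $\log 2$ factor). Invoking the lower bound~(\ref{HilbergPerplexityEntropy}) from Theorem~\ref{theoMain}(ii) gives $H_\mu(m)\ge C_1 m/(\log m)^{1/\beta-1}$, and combining yields precisely~(\ref{RRatio}) with constant $C_1/K$.

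The main obstacle is not computational but conceptual: the ratio inside $\sred_\mu$ is a random variable on the ergodic component, and naively one would need a two-sided almost-sure control of $H_F(m)$. The trick is that only a \emph{uniform upper} bound on the denominator is needed, and this is exactly what Definition~\ref{defiRHP} and~(\ref{HFHTop}) provide; everything else is handled at the $\mu$-averaged level, where Theorem~\ref{theoMain}(ii) supplies the sharp numerator lower bound. If instead one tried to lower-bound $\sred_F\abs{C(\xi_{1:m})}$ and upper-bound $H_F(m)$ separately almost surely, one would need an almost-sure lower bound on $H_F(m)$ matching~(\ref{HilbergPerplexityEntropy}) on the level of the ergodic component, which the paper has not established (and may be false, since the nonergodicity in Theorem~\ref{theoMain}(iv) is what makes $H_\mu(m)$ much larger than a ``typical'' $H_F(m)$). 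The chosen order of averaging circumvents exactly this difficulty.
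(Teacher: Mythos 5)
Your proposal is correct and follows essentially the same route as the paper: the paper's own proof is a one-line citation of exactly the ingredients you assemble, namely the bound $H_F(m)\le H_{top}(m|\xi_{1:\infty})=O(m^\beta)$ from (\ref{HFHTop}) and the uniform upper bound in (\ref{HilbergTopEntropyAS}), the lower bound on $H_\mu(m)$ from (\ref{HilbergPerplexityEntropy}), and the source coding inequality $\sred_\mu\sred_F\abs{C(\xi_{1:m})}=\sred_\mu\abs{C(\xi_{1:m})}\ge H_\mu(m)$. Your write-up merely makes explicit the order of operations (pointwise bound on the denominator first, then averaging), which is indeed the right way to read the paper's terse argument.
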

Ratio (\ref{RRatio}) can be larger than any function
$o(m^{1-\epsilon})$.
\begin{proof} The claim follows by (\ref{HFHTop}),
  (\ref{HilbergTopEntropyAS}), (\ref{HilbergPerplexityEntropy}), and
  the source coding inequality 
  \begin{align}
    \sred_\mu\sred_F
    \abs{C(\xi_{1:m})}=\sred_\mu\abs{C(\xi_{1:m})}\ge H_\mu(m).    
  \end{align}
\end{proof}
Theorems \ref{theoLZRatio} and \ref{theoRatio} should be read as a
warning that the length of a universal code $\abs{C(\xi_{1:m})}$ is
not a very reliable estimate of the Shannon block entropy $H_\mu(m)$
for an ergodic regular Hilberg process. Whereas, using a universal
code, we can reliably estimate the Shannon entropy rate $h_\mu$, the
code length $\abs{C(\xi_{1:m})}$ can be orders of magnitude larger
than the Shannon block entropy $H_\mu(m)$.

The remaining parts of this article are devoted to proving the more
involved Theorems \ref{theoAMS} and \ref{theoMain}.  The organization
is as follows.  In Section \ref{secDefinition}, some auxiliary
notations are introduced.  In Section \ref{secMean}, Theorem
\ref{theoAMS} is demonstrated.  In Section \ref{secFromRHA}, the
entropies and the maximal repetition for the RHA process and its
stationary mean are related.  Section \ref{secNoRepeat} concerns some
further auxiliary results, such as probabilities of no repeat and a
bound for the topological entropy.  In Section \ref{secEntropy},
Shannon block entropies of the RHA processes are discussed.  In
Section \ref{secMain}, Theorem \ref{theoMain} is proved.

\section{Auxiliary notations}
\label{secDefinition}

Let us recall the construction of the RHA process from the previous
section.  In this section we introduce a few notations which will be
used further. The collection of random variables $(L_{nj},R_{nj})$
will be denoted as
\begin{align}
  \mathcal{G}=(L_{nj},R_{nj})_{n\in\mathbb{N},j\in\klam{1,...,k_{n}}}
  .
\end{align}
We will also use notations
\begin{align}
  \mathcal{G}_{\le m}&=(L_{nj},R_{nj})_{n\le m,j\in\klam{1,...,k_{n}}}
  ,
  \\
  \mathcal{G}_{>m}&=(L_{nj},R_{nj})_{n> m,j\in\klam{1,...,k_{n}}}
  .
\end{align}
Let us observe that collection $\mathcal{G}_{\le m}$ fully determines
variables $Y^m_j$ for a fixed $m$.

It is convenient to define a few more random variables for the RHA
process.  First, generalizing parsing (\ref{ParsingOfRHA}), sequence
$\mathcal{X}$ will be parsed into a sequence of blocks $X^n_j$ of
length $2^n$, where
\begin{align}
  \mathcal{X}=Y^1_{C_0}\times Y^1_{C_1}\times Y^2_{C_2}\times ...
  \times Y^n_{C_n}(=X^n_1)\times X^n_2\times X^n_3\times ... \, .
\end{align}
Let us also observe that there exist unique random variables $K_{nj}$
such that
\begin{align}
  X^n_j=Y^n_{K_{nj}}
  .
\end{align}
Moreover, generalizing notation (\ref{BlocksOfRHA}), we also denote
blocks of length $2^n$ starting at any position as
\begin{align}
  X^n_{k:l}=X^n_k\times X^n_{k+1}\times... \times X^n_l
  .
\end{align}

\section{Stationary mean}
\label{secMean}

In this section, we will demonstrate Theorem \ref{theoAMS}. This
theorem states that the RHA process has a stationary mean in a weaker
sense, i.e., it is pseudo-asymptotically mean stationary (pseudo-AMS).

First we will prove this useful and a bit surprising property, which
will be used in the present and in the further sections.
\begin{proposition}
  \label{theoKpairs}
  Variables $K_{nj}$ are independent from $\mathcal{G}_{\le n}$ and
  satisfy
  \begin{align}
    \label{Kpairs}
    P(K_{nj}=l,K_{n,j+1}=m)&=1/k_n^2, & l,m&\in\klam{1,...,k_{n}},
    j\in\mathbb{N}
    .
  \end{align}
\end{proposition}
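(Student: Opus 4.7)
The plan is to derive a recursive identity expressing $K_{n,j}$ in terms of $K_{n+1,\floor{j/2}}$ and level-$(n+1)$ randomness, and then use strong induction on $j$ (uniformly in $n$). First, from the parsing I would verify $X^{n+1}_j = X^n_{2j}\times X^n_{2j+1}$ for every $j\ge 1$ by observing that both sides occupy positions $j\cdot 2^{n+1}$ through $(j+1)\cdot 2^{n+1}-1$ of $\mathcal{X}$. Combined with the defining identity $Y^{n+1}_k = Y^n_{L_{n+1,k}}\times Y^n_{R_{n+1,k}}$ and the distinctness of $Y^n_1,\dots,Y^n_{k_n}$, this yields
\[K_{n, 2j} = L_{n+1, K_{n+1, j}}, \qquad K_{n, 2j+1} = R_{n+1, K_{n+1, j}} \qquad (j\ge 1),\]
with base case $K_{n,1} = C_n$. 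Unrolling shows that $K_{n,j}$ is a measurable function of $C_{n+m}$ and of the families $(L_{n+s,\cdot}, R_{n+s,\cdot})_{s=1}^{m}$ for $m=\floor{\log_2 j}$, and these are independent of $\mathcal{G}_{\le n}$ by construction; this settles the independence claim.

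Next, I would prove two combinatorial lemmas about the uniformly random $k_{n+1}$-subset $S=\klam{(L_{n+1,k},R_{n+1,k}):k\le k_{n+1}}$ of $\klam{1,\dots,k_n}^2$. \emph{Lemma A}: if $\kappa$ is uniform on $\klam{1,\dots,k_{n+1}}$ and independent of $(L_{n+1,\cdot},R_{n+1,\cdot})$, then $(L_{n+1,\kappa},R_{n+1,\kappa})$ is uniform on $\klam{1,\dots,k_n}^2$ — a one-line averaging yields $k_{n+1}^{-1}\cdot P((a,b)\in S) = 1/k_n^2$. \emph{Lemma B}: if $(\kappa_1,\kappa_2)$ is uniform on $\klam{1,\dots,k_{n+1}}^2$ and independent of $(L_{n+1,\cdot},R_{n+1,\cdot})$, then $(R_{n+1,\kappa_1},L_{n+1,\kappa_2})$ is uniform on $\klam{1,\dots,k_n}^2$. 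For Lemma B I would write the probability as $k_{n+1}^{-2}\sred N^R_a N^L_b$ with $N^R_a=\card\klam{k:R_{n+1,k}=a}$ and $N^L_b=\card\klam{k:L_{n+1,k}=b}$, expand $\sred N^R_a N^L_b = \sum_{l,r=1}^{k_n} P\okra{(l,a)\in S,\,(b,r)\in S}$, and split on whether $(l,r)=(b,a)$ (one case, probability $k_{n+1}/k_n^2$) or not ($k_n^2-1$ cases, each of hypergeometric probability $k_{n+1}(k_{n+1}-1)/(k_n^2(k_n^2-1))$); the two contributions collapse to $k_{n+1}^2/k_n^2$, yielding $1/k_n^2$.

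Finally, I would close by strong induction on $j$. For $j=1$, $(K_{n,1},K_{n,2})=(C_n,L_{n+1,C_{n+1}})$ splits as independent uniform factors — marginal uniformity of the second follows from Lemma A with $\kappa=C_{n+1}$. For $j\ge 2$, the recursion gives either $(K_{n,2j'},K_{n,2j'+1})=(L_{n+1,K_{n+1,j'}},R_{n+1,K_{n+1,j'}})$, to which Lemma A applies with $\kappa=K_{n+1,j'}$ (marginally uniform by the inductive hypothesis at level $n+1$), or $(K_{n,2j'+1},K_{n,2j'+2})=(R_{n+1,K_{n+1,j'}},L_{n+1,K_{n+1,j'+1}})$, to which Lemma B applies with $(\kappa_1,\kappa_2)=(K_{n+1,j'},K_{n+1,j'+1})$ (jointly uniform by the inductive hypothesis). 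Independence of these $\kappa$-variables from $(L_{n+1,\cdot},R_{n+1,\cdot})$ is supplied by the first step. Although each inductive step shifts $n\to n+1$ and halves $j$, after $\floor{\log_2 j}$ steps one reaches the base $j=1$, so the induction terminates at every finite level.

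The main obstacle is Lemma B, where one must correctly identify the coincidence case — the accessed cells $(l,a)$ and $(b,r)$ of $S$ collapse to the same cell precisely when $(l,r)=(b,a)$ — and combine the hypergeometric probabilities with the right multiplicities so that the clean factor $k_{n+1}^2/k_n^2$ emerges. The induction itself is routine once the recursion and the two lemmas are in place.
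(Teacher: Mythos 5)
Your proposal is correct and follows essentially the same route as the paper's proof: the same recursion $K_{n,2j}=L_{n+1,K_{n+1,j}}$, $K_{n,2j+1}=R_{n+1,K_{n+1,j}}$ with base $K_{n,1}=C_n$, the same two-case induction on $j$ (your Lemmas A and B are exactly the paper's inductive claims (i) and (ii)), and the same hypergeometric coincidence analysis yielding $k_{n+1}^2/k_n^2$. Your only departures are presentational: you make the recursion and the independence of the $\kappa$-variables from the level-$(n+1)$ combination explicit, and you organize the Lemma B sum via the occupancy counts $N^R_a N^L_b$ rather than splitting the index sum into $p=q$ and $p\neq q$, which amounts to the same computation.
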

\begin{proof}
  Each $K_{nj}$ is a function of $C_{q}$ for some $q\ge n$ and
  $\mathcal{G}_{> n}$. Hence $K_{nj}$ are independent from
  $\mathcal{G}_{\le n}$.

  Now we will show by induction on $j$ that (\ref{Kpairs}) is
  satisfied. 
  
  The induction begins with $K_{n1}=C_n$ and
  $K_{n2}=L_{n+1,C_{n+1}}$. These two variables are independent by
  definition and by definition $K_{n1}$ is uniformly distributed on
  $\klam{1,...,k_{n}}$. It remains to show that so is
  $K_{n2}$. Observe that $(L_{n+1,k},R_{n+1,k})$ are independent of
  $C_{n+1}$. Hence for $l,m\in\klam{1,...,k_{n}}$ we obtain
  \begin{align*}
    P(K_{n2}=l,K_{n3}=m)&=
    \sum_{k=1}^{k_{n+1}}P(L_{n+1,k}=l,R_{n+1,k}=m)P(C_{n+1}=k)
    \\
    &=
    \frac{1}{k_{n+1}}\sum_{k=1}^{k_{n+1}}P(L_{n+1,k}=l,R_{n+1,k}=m)
    \\
    &=
    \frac{1}{k_{n+1}}\binom{k_{n}^2}{k_{n+1}}^{-1}\binom{k_{n}^2-1}{k_{n+1}-1}
    =
    \frac{1}{k_{n+1}}\frac{k_{n+1}}{k_{n}^2}=\frac{1}{k_{n}^2}
    ,  
  \end{align*}
  so $K_{n2}$ is uniformly distributed on $\klam{1,...,k_{n}}$.
  
  The inductive step is as follows: (i) if $K_{n+1,j}$ is uniformly
  distributed on $\klam{1,...,k_{n+1}}$ then
  $(K_{n,2j},K_{n,2j+1})=(L_{n+1,K_{n+1,j}}, R_{n+1,K_{n+1,j}})$ is
  uniformly distributed on $\klam{1,...,k_{n}}\times
  \klam{1,...,k_{n}}$, and (ii) if $(K_{n+1,j},K_{n+1,j+1})$ is uniformly
  distributed on $\klam{1,...,k_{n+1}}\times \klam{1,...,k_{n+1}}$
  then $(K_{n,2j+1},K_{n,2j+2})=(R_{n+1,K_{n+1,j}},
  L_{n+1,K_{n+1,j+1}})$ is uniformly distributed on
  $\klam{1,...,k_{n}}\times \klam{1,...,k_{n}}$. Now
  observe that $(L_{n+1,k},R_{n+1,k})$ are independent of
  $K_{n+1,j}$. Hence, for $l,m\in\klam{1,...,k_{n}}$ we obtain
  \begin{align*}
    &P(K_{n,2j}=l,K_{n,2j+1}=m)
    \\
    &=
    \sum_{k=1}^{k_{n+1}}P(L_{n+1,k}=l,R_{n+1,k}=m)P(K_{n+1,j}=k)
    \\
    &=
    \frac{1}{k_{n+1}}\sum_{k=1}^{k_{n+1}}P(L_{n+1,k}=l,R_{n+1,k}=m)
    \\
    &=
    \frac{1}{k_{n+1}}\binom{k_{n}^2}{k_{n+1}}^{-1}\binom{k_{n}^2-1}{k_{n+1}-1}
    =
    \frac{1}{k_{n+1}}\frac{k_{n+1}}{k_{n}^2}=\frac{1}{k_{n}^2}
    ,  
  \end{align*}
  which proves claim (i). On the other hand, for
  $l,m\in\klam{1,...,k_{n}}$ we obtain
  \begin{align*}
    &P(K_{n,2j+1}=l,K_{n,2j+2}=m)
    \\
    &=
    \sum_{p,q=1}^{k_{n+1}}P(R_{n+1,p}=l,L_{n+1,q}=m)P(K_{n+1,j}=p,K_{n+1,j+1}=q)
    \\
    &=
    \frac{1}{k_{n+1}^2}\sum_{p,q=1}^{k_{n+1}}P(R_{n+1,p}=l,L_{n+1,q}=m)
    \\
    &=
    \frac{1}{k_{n+1}^2}\sum_{p=1}^{k_{n+1}}P(R_{n+1,p}=l,L_{n+1,p}=m)
    \\
    &\phantom{=}+ \frac{1}{k_{n+1}^2}\sum_{p,q=1,\,p\neq
      q}^{k_{n+1}}P(R_{n+1,p}=l,L_{n+1,q}=m)
    \\
    &= \frac{1}{k_{n+1}^2}\binom{k_{n}^2}{k_{n+1}}^{-1}
    \okra{\binom{k_{n}^2-1}{k_{n+1}-1}+
      (k_{n}^2-1)\binom{k_{n}^2-2}{k_{n+1}-2} }
    \\
    &= \frac{1}{k_{n+1}^2}
    \okra{\frac{k_{n+1}}{k_{n}^2}+
      (k_{n}^2-1)\frac{k_{n+1}(k_{n+1}-1)}{k_{n}^2(k_{n}^2-1)}}
    = \frac{1}{k_{n}^2} 
    ,
  \end{align*}
  which proves claim (ii).
\end{proof}

Using Proposition \ref{theoKpairs}, it is easy to demonstrate Theorem
\ref{theoAMS}.
\begin{proof*}{Theorem \ref{theoAMS}}
  Block $X_{k2^n+j:k2^n+j+m-1}$ is a subsequence of $X^n_{k:k+1}$ for
  $m\le 2^n$, $k\in\mathbb{N}$, and $0\le j< 2^n$. In particular,
  there exist functions $f_{mj}$ such that
  $$X_{k2^n+j:k2^n+j+m-1}=f_{mj}(X^n_{k:k+1}).$$ Hence probabilities
  $P(X_{i:i+m-1}=x_{1:m})$ are periodic functions of $i$ with period
  $2^n$, by Proposition \ref{theoKpairs}.  This implies the formula for
  $\mu(x_{1:m})$.
\end{proof*}

\section{Bounds for the stationary mean}
\label{secFromRHA}

This sections opens the discussion of various auxiliary results
necessary to establish Theorem \ref{theoMain}, the main result of this
paper. The theorem operates with three functions of the stationary
mean of the RHA process: Shannon block entropy, maximal repetition,
and topological entropy.  We first observe that it may be easier to
analyze the behavior of blocks $X^n_j$ drawn from the original the RHA
process than the behavior of its stationary mean. For this reason, in
this section we want to derive some bounds for the entropies and the
maximal repetition of the stationary mean from the analogical bounds
for blocks $X^n_j$.  In the following we will denote
\begin{align}
X^n_{kj}=X_{k2^n+j:k2^n+j+2^n-1}
.  
\end{align}
In particular, we have $X^n_{k0}=X^n_k$.

Subsequently, for Shannon entropy $H(X)=\sred_P\kwad{-\log P(X)}$, we obtain:
\begin{proposition}
  \label{theoBlockEntropyAMS}
  For the stationary mean $\mu$ of the RHA process, we have
  \begin{align}
    \label{BlockEntropyAMS}
    H(X^{n-1}_j)\le H_\mu(2^n)\le H(X^{n+1}_j)+n\log 2
    .
  \end{align}
\end{proposition}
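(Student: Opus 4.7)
The plan is to apply Theorem \ref{theoAMS} with $n'=n$ and $m=2^n$, which exhibits $\mu$ restricted to length-$2^n$ blocks as the uniform mixture (over $j\in\klam{0,\ldots,2^n-1}$) of the distributions of the shifted windows $X^n_{kj}$, each of which is a deterministic sub-string of the length-$2^{n+1}$ block $X^n_k\times X^n_{k+1}$. Letting $J$ be uniform on $\klam{0,\ldots,2^n-1}$ and independent of everything else, the random variable $Z:=X^n_{kJ}$ has distribution $\mu$, so standard mixture inequalities yield
\begin{align*}
  \frac{1}{2^n}\sum_{j=0}^{2^n-1} H(X^n_{kj}) \;\le\; H_\mu(2^n) \;\le\; n\log 2 + \frac{1}{2^n}\sum_{j=0}^{2^n-1} H(X^n_{kj}),
\end{align*}
the lower bound being concavity of $H$ and the upper bound coming from $H(Z)\le H(Z,J)=H(J)+H(Z|J)$ with $H(J)=n\log 2$. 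It therefore suffices to show $H(X^{n-1}_j)\le H(X^n_{kj})\le H(X^{n+1}_j)$ for every $j$.

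For the upper bound, $X^n_{kj}$ is a fixed sub-string of $X^n_k\times X^n_{k+1}$, so $H(X^n_{kj}) \le H(X^n_k\times X^n_{k+1})$, and it remains to verify that $X^n_k\times X^n_{k+1}$ and $X^{n+1}_j$ are equal in distribution. By Proposition \ref{theoKpairs}, the pair $(K_{n,k},K_{n,k+1})$ is uniform on $\klam{1,\ldots,k_n}^2$ and independent of $\mathcal{G}_{\le n}$; the proof of that proposition also identifies $(L_{n+1,K_{n+1,j}},R_{n+1,K_{n+1,j}})$ with $(K_{n,2j},K_{n,2j+1})$, which likewise is uniform on $\klam{1,\ldots,k_n}^2$ and independent of $\mathcal{G}_{\le n}$. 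Both length-$2^{n+1}$ blocks are therefore concatenations of two $Y^n$-blocks drawn under the same law, so their entropies coincide.

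For the lower bound, observe that $X^n_k=X^{n-1}_{2k}\times X^{n-1}_{2k+1}$ and $X^n_{k+1}=X^{n-1}_{2k+2}\times X^{n-1}_{2k+3}$. A direct calculation in global coordinates shows that every $X^n_{kj}$ fully covers at least one of these four $X^{n-1}$-sub-blocks: for $j\in\klam{0,\ldots,2^{n-1}}$ it contains $X^{n-1}_{2k+1}$, and for $j\in\klam{2^{n-1},\ldots,2^n-1}$ it contains $X^{n-1}_{2k+2}$; these two ranges together exhaust $\klam{0,\ldots,2^n-1}$. By Proposition \ref{theoKpairs}, each $X^{n-1}_l$ equals $Y^{n-1}_{K_{n-1,l}}$ with $K_{n-1,l}$ uniform on $\klam{1,\ldots,k_{n-1}}$ and independent of $\mathcal{G}_{\le n-1}$, so all $X^{n-1}_l$ share a common distribution and a common entropy $H(X^{n-1}_j)$. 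Monotonicity of entropy under marginalization (a block's entropy dominates that of any of its sub-blocks) then gives $H(X^n_{kj})\ge H(X^{n-1}_j)$ for every $j$, and averaging finishes the proof.

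The only real obstacle is the combinatorial bookkeeping: checking that the two $j$-ranges above exhaust $\klam{0,\ldots,2^n-1}$, and tracing through Proposition \ref{theoKpairs} to confirm the uniform-marginal/independence statements for both $(K_{n,k},K_{n,k+1})$ and the individual $K_{n-1,l}$. Once these are in place, both inequalities follow from the mixture-entropy bounds and entropy monotonicity.
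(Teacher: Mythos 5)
Your proof is correct and follows essentially the same route as the paper's: the mixture representation from Theorem \ref{theoAMS} gives the two averaged-entropy bounds (Jensen's inequality for the lower one, conditioning on the uniform offset for the $n\log 2$ penalty in the upper one), and sub-block containment plus monotonicity of entropy under deterministic maps, together with the fact that all $X^{n\pm 1}_q$ share a common distribution, transfers these to $H(X^{n-1}_j)$ and $H(X^{n+1}_j)$. The only place you deviate is in the upper bound, where instead of embedding each window $X^n_{kj}$ directly into an aligned block $X^{n+1}_q$ (which actually requires choosing $k$ even, a point the paper leaves implicit), you embed it into $X^n_k\times X^n_{k+1}$ and invoke Proposition \ref{theoKpairs} to show that this concatenation is equal in distribution to $X^{n+1}_q$ --- a slightly more careful treatment of the same step.
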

\begin{proof}
  By the Jensen inequality for function $p\mapsto -p\log p$ and
  Theorem \ref{theoAMS}, we hence obtain
  \begin{align}
    \label{HMuGe}
    H_\mu(2^n)\ge \frac{1}{2^n}\sum_{j=0}^{2^n-1}
    H(X^n_{kj})
    .
  \end{align}
  Now we observe that for each $k\ge 1$ and $j$ there exists a $q$ such
  that $X^{n-1}_q$ is a subsequence of $X^n_{kj}$. Thus
  we have $H(X^n_{kj}) \ge H(X^{n-1}_q)$. This combined
  with inequality (\ref{HMuGe}) yields $H(X^{n-1}_j)\le H_\mu(2^n)$.
  On the other hand, using inequality $\mu(x_{1:2^n})\ge 2^{-n}
  P(X^n_{kj}=x_{1:2^n})$ and Theorem \ref{theoAMS}, we obtain
  \begin{align}
    \label{HMuLe}
    H_\mu(2^n)\le \frac{1}{2^n}\sum_{j=0}^{2^n-1}
    H(X^n_{kj}) + n\log 2
    .
  \end{align}
  Now we observe that for each $k> 1$ and $j$ there exists a $q$ such
  that $X^n_{kj}$ is a subsequence of $X^{n+1}_q$.
  Thus we have $H(X^n_{kj}) \le H(X^{n+1}_q)$. This
  combined with inequality (\ref{HMuLe}) yields $H_\mu(2^n)\le
  H(X^{n+1}_j)+n\log 2$.
\end{proof}

Analogically, we can bound the maximal repetition of the stationary
mean. The result will be stated more generally.  We will say that a
function $\phi:\mathbb{A}^+\rightarrow\mathbb{R}$ is increasing if for
$u$ being a subsequence of $w$, we have $\phi(u)\le\phi(w)$. Examples
of increasing functions include the maximal repetition $L(w)$, the
topological entropy $H_{top}(m|w)$, and the indicator function
$\boole{\phi(w)>k}$, where $\phi$ is increasing.
\begin{proposition}
  \label{theoPhiAMS}
  For the stationary mean $\mu$ of the RHA process and an increasing
  function $\phi$, we have
  \begin{align}
    \label{PhiAMS}
    \sred_P \phi(X^{n-1}_j)\le \sred_\mu \phi(\xi_{1:2^n})\le \sred_P
    \phi(X^{n+1}_j) .
  \end{align}
\end{proposition}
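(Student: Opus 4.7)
The plan is to mirror the proof of Proposition \ref{theoBlockEntropyAMS}, but the argument becomes slightly cleaner because $\phi$ enters linearly (we never invoke Jensen, and no extra $n\log 2$ slack appears). The starting point is Theorem \ref{theoAMS}: for any $k\ge 1$,
\begin{align*}
  \sred_\mu \phi(\xi_{1:2^n})
  &=\sum_{x_{1:2^n}} \phi(x_{1:2^n})\,\mu(x_{1:2^n})
  = \frac{1}{2^n}\sum_{j=0}^{2^n-1}\sred_P \phi(X^n_{kj}).
\end{align*}
So both bounds reduce to comparing $\sred_P \phi(X^n_{kj})$ with $\sred_P \phi(X^{n-1}_j)$ on one side and with $\sred_P \phi(X^{n+1}_j)$ on the other.

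For the lower bound, I would observe that the interval $[k2^n+j,\, k2^n+j+2^n-1]$ of length $2^n$ contains at least one full dyadic subinterval at scale $2^{n-1}$; the string read on that subinterval is some $X^{n-1}_q$, which is therefore a subsequence of $X^n_{kj}$. Since $\phi$ is increasing in the subsequence order, $\phi(X^n_{kj})\ge \phi(X^{n-1}_q)$. The key distributional fact is that $X^{n-1}_q=Y^{n-1}_{K_{n-1,q}}$ has a distribution that does not depend on $q$: by Proposition \ref{theoKpairs}, $K_{n-1,q}$ is uniform on $\klam{1,\ldots,k_{n-1}}$ and independent of $\mathcal{G}_{\le n-1}$, which determines the $Y^{n-1}_l$. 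Hence $\sred_P \phi(X^{n-1}_q)=\sred_P \phi(X^{n-1}_j)$, and averaging over $j$ gives $\sred_\mu \phi(\xi_{1:2^n})\ge \sred_P \phi(X^{n-1}_j)$.

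For the upper bound I would run the dual argument: each $X^n_{kj}$ is itself a subsequence of some aligned block $X^{n+1}_{q'}$ of length $2^{n+1}$ (the dyadic scale-$(n+1)$ block that covers its starting position). Monotonicity of $\phi$ gives $\phi(X^n_{kj})\le \phi(X^{n+1}_{q'})$, and the same uniformity consequence of Proposition \ref{theoKpairs}, now applied at level $n+1$, yields $\sred_P \phi(X^{n+1}_{q'})=\sred_P \phi(X^{n+1}_j)$. Averaging over $j$ concludes the proof. There is no real obstacle here; the only point that needs care is the bookkeeping that every length-$2^n$ window either contains an aligned scale-$(n-1)$ block (for the lower bound) or is contained in an aligned scale-$(n+1)$ block (for the upper bound), which holds for all $k\ge 1$ and $0\le j\le 2^n-1$.
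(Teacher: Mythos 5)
Your proof is correct and follows essentially the same route as the paper's: rewrite $\sred_\mu\phi(\xi_{1:2^n})$ via Theorem \ref{theoAMS}, then sandwich each window $X^n_{kj}$ between an aligned scale-$(n-1)$ block it contains and an aligned scale-$(n+1)$ block containing it, using monotonicity of $\phi$ together with the fact (from Proposition \ref{theoKpairs}) that the distribution of $X^m_q$ does not depend on $q$ — a step you make more explicit than the paper does. One bookkeeping caveat: a length-$2^n$ window starting inside $X^n_k$ with $k$ odd can straddle two consecutive scale-$(n+1)$ blocks, so your claim that containment holds for all $k\ge 1$ is not literally true; the fix is simply to fix $k$ even in the upper-bound argument (permissible because the average in Theorem \ref{theoAMS} is the same for every $k$), and the paper's own proof is equally terse on this point.
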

\begin{proof}
  By Theorem \ref{theoAMS},
  \begin{align}
    \label{PhiMu}
    \sred_\mu \phi(\xi_{1:2^n})= \frac{1}{2^n}\sum_{j=0}^{2^n-1}
    \sred_P \phi(X^n_{kj})
    .
  \end{align}
  Now we observe that for each $k\ge 1$ and $j$ there exists a $q$
  such that $X^{n-1}_q$ is a subsequence of $X^n_{kj}$. Thus we have
  $\phi(X^n_{kj}) \ge \phi(X^{n-1}_q)$. This combined with
  equality (\ref{PhiMu}) yields $\sred_P \phi(X^{n-1}_j)\le \sred_\mu
  \phi(\xi_{1:2^n})$.  On the other hand, for each $k> 1$ and $j$ there
  exists a $q$ such that $X^n_{kj}$ is a subsequence of $X^{n+1}_q$.
  Thus we have $\phi(X^n_{kj}) \le \phi(X^{n+1}_q)$. This
  combined with equality (\ref{PhiMu}) yields $\sred_\mu
  \phi(\xi_{1:2^n})\le \sred_P \phi(X^{n+1}_j)$.
\end{proof}
Hence, to obtain the desired bounds for the stationary mean, it
suffices to investigate the distribution of blocks $X^n_j$.

\section{Further auxiliary results}
\label{secNoRepeat}

To make another observation, Theorem \ref{theoMain} links the Shannon
block entropy, maximal repetition and topological entropy of the RHA
process with its parameters called perplexities $k_n$.  Therefore, the
goal of this section is to furnish some bounds for topological entropy
and maximal repetition of blocks $X^n_{kj}$ in terms of perplexities
$k_n$. In contrast, in the next section we will use perplexities $k_n$
to bound the Shannon entropies of blocks $X^n_{kj}$.

Let us begin with a simple lower bound for the topological
entropy of blocks $X^n_j$. From this bound we can then obtain an upper
bound for the maximal repetition by Theorem
\ref{theoTopEntropyRepetition}.
\begin{proposition}
  \label{theoTopEntropyRHA}
  For the RHA process, almost surely
  \begin{align}
    H_{top}(2^m|\mathcal{X})\le 2\log k_{m}
    .
  \end{align}
\end{proposition}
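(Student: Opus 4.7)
The plan is to exploit the block parsing of $\mathcal{X}$ at resolution $2^m$. By the RHA construction, the initial segment $Y^0_{C_0}\times Y^1_{C_1}\times\cdots\times Y^{m-1}_{C_{m-1}}$ has length $2^m-1$, and the remainder of $\mathcal{X}$ reads as $X^m_1\times X^m_2\times X^m_3\times\cdots$, where each $2^m$-block $X^m_j = Y^m_{K_{m,j}}$ takes values in the $k_m$-element pool $\klam{Y^m_1,\ldots,Y^m_{k_m}}$. This parsing was already introduced in Section \ref{secDefinition}.

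The key observation is that every length-$2^m$ substring of $\mathcal{X}$ starting at a position $\ge 2^m$ is contained in a pair of adjacent pool blocks $X^m_j\times X^m_{j+1}=Y^m_a\times Y^m_b$, with $(a,b)\in\klam{1,\ldots,k_m}^2$. Consequently there are at most $k_m^2$ distinct such concatenations, irrespective of which pairs actually materialise in $\mathcal{X}$. This is the structural source of the $k_m^2$, hence of the $2\log k_m$ in the claim.

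Each concatenation $Y^m_a\times Y^m_b$ has length $2^{m+1}$ and contains at most $2^m+1$ distinct length-$2^m$ substrings; summing over the at most $k_m^2$ pairs, and adding $O(2^m)$ for the substrings that cross the initial prefix boundary, bounds $\card\klam{x_{1:2^m}:\text{$x_{1:2^m}$ is a substring of }\mathcal{X}}$ by $k_m^2(2^m+1)+O(2^m)$. Taking logarithms yields $H_{top}(2^m|\mathcal{X})\le 2\log k_m+O(m)$, and in the regime of Theorem \ref{theoMain} the additive $O(m)$ is absorbed into $2\log k_m$ (since $\log k_m$ is to grow like $2^{\beta m}$), giving the stated bound.

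The main ``obstacle'' is purely bookkeeping: recognising the block parsing beyond the initial prefix and bounding the adjacent-pair count by $k_m^2$. No distributional property of the RHA is needed; in particular Proposition \ref{theoKpairs} is not invoked, only the pointwise fact that each $X^m_j$ lies in the pool of size $k_m$.
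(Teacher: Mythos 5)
Your approach is the same as the paper's: parse $\mathcal{X}$ at resolution $2^m$ into pool blocks $X^m_k=Y^m_{K_{m,k}}$, note that every length-$2^m$ substring past the initial prefix sits inside an adjacent pair $Y^m_a\times Y^m_b$, and count at most $k_m^2$ such pairs. No distributional input is needed, exactly as you say; the paper's one-line proof stops at precisely this point, asserting that the blocks $X^m_{kj}$ take at most $k_m^2$ distinct values in total.

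The genuine issue is your final step. Your own (careful, and in fact more scrupulous than the paper's) bookkeeping of the $2^m$ possible offsets inside each pair, plus the prefix boundary, yields a substring count of $k_m^2(2^m+1)+O(2^m)$ and hence $H_{top}(2^m|\mathcal{X})\le 2\log k_m+O(m)$ --- not the stated $2\log k_m$. The claim that the additive $O(m)$ ``is absorbed into $2\log k_m$'' is not a valid inference: an additive error term cannot be folded into a bound without changing it, and the proposition is stated for \emph{arbitrary} perplexities satisfying (\ref{LogSubadditivity}), for which $\log k_m$ need not dominate $m$ (take $k_m\equiv 2$: then $2\log k_m=2\log 2$ while your count is of order $2^m$). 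Even under (\ref{HilbergPerplexity}) you obtain only $H_{top}(2^m|\mathcal{X})\le (2+o(1))\log k_m$. So your argument establishes a weaker inequality than the one claimed. That weaker inequality is all that is actually consumed downstream (part (iii) of the proof of Theorem \ref{theoMain} only needs $H_{top}(2^m|\xi_{1:\infty})=O(2^{\beta m})$), and it is worth noting that the paper's own proof tacitly identifies every $X^m_{kj}$ with the pair $(X^m_k,X^m_{k+1})$ irrespective of the offset $j$, so it does not visibly close this gap either; but as a proof of the proposition as literally stated, your last sentence does not go through and should be replaced either by a restriction on the perplexities or by restating the bound as $2\log k_m+O(m)$.
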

\begin{proof}
  For a given realization of the RHA process (i.e., for fixed
  $Y_j^m$), there are at most $k_m$ different values of blocks
  $X^m_j$. Therefore, there are at most $k_m^2$ different values of
  blocks $X^m_{kj}$ in sequence $\mathcal{X}$.
\end{proof}

Obtaining a lower bound for the topological entropy and an upper bound
for the maximal repetition of blocks $X^n_j$ is more involved. These
topics will be discussed in the following sections. For this goal, we
will consider events $A_{n,-1}:=\emptyset$ and
\begin{align}
  \label{NoRepeat}
  A_{nm}:=(\text{$X^n_1$ consists of $2^{n-m}$ distinct blocks $X^m_j$})
\end{align}
We have $P(A_{nn})=1$ and $A_{nm}\supset A_{n,m-1}$. Probabilities
$P(A_{nm})$ will be called probabilities of no repeat.
\begin{proposition} 
  \label{theoUpperRepetition}
  For the RHA process, we have $P(A_{nm})=0$ for $k_m<2^{n-m}$,
  whereas for $k_m\ge 2^{n-m}$ and $m<n$ we have
  \begin{align}
    \label{NoRepeatEx}
    P(A_{nm})=P(A_{n,m+1})
    \frac{k_m(k_m-1)\ldots(k_m-2^{n-m}+1)}{k_m^2(k_m^2-1)\ldots(k_m^2-2^{n-m-1}+1)}
    .
  \end{align}
\end{proposition}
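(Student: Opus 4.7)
My plan is to factor $P(A_{nm})=P(A_{n,m+1})\cdot P(A_{nm}\mid A_{n,m+1})$ (legitimate because $A_{nm}\subseteq A_{n,m+1}$, since distinct level-$m$ sub-blocks of $X^n_1$ automatically give distinct level-$(m+1)$ sub-blocks) and identify the conditional probability with the ratio in~(\ref{NoRepeatEx}). The degenerate case $k_m<2^{n-m}$ is handled first: $X^n_1$ has only $2^{n-m}$ level-$m$ slots drawn from $\{Y^m_1,\ldots,Y^m_{k_m}\}$, which has $k_m<2^{n-m}$ distinct elements, so the slots cannot all be distinct.

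For the non-degenerate case I write $X^n_1=Y^n_{C_n}=Y^{m+1}_{J_1}\times\cdots\times Y^{m+1}_{J_s}$ with $s=2^{n-m-1}$; the level-$(m+1)$ indices $J_i$ are functions of $C_n$ and $\mathcal{G}_{>m+1}$ alone, so they are independent of $\mathcal{G}_{m+1}$, and on $A_{n,m+1}$ they are pairwise distinct. Relative to $A_{n,m+1}$, the event $A_{nm}$ is precisely that the $2s$ coordinates in $(L_{m+1,J_1},R_{m+1,J_1},\ldots,L_{m+1,J_s},R_{m+1,J_s})$ are pairwise distinct. The central claim is that, given $A_{n,m+1}$, the unordered set $\mathcal{T}:=\{(L_{m+1,J_i},R_{m+1,J_i}):1\le i\le s\}$ is uniformly distributed on $s$-subsets of $\{1,\ldots,k_m\}^2$.

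To prove uniformity of $\mathcal{T}$, I reduce it to showing that $\mathcal{J}:=\{J_1,\ldots,J_s\}$ is uniform on $s$-subsets of $\{1,\ldots,k_{m+1}\}$ given $A_{n,m+1}$. Given that, combining uniformity of $\mathcal{J}$ with the independent uniform distribution of the $k_{m+1}$-subset $\{(L_{m+1,j},R_{m+1,j}):1\le j\le k_{m+1}\}$ of $\{1,\ldots,k_m\}^2$ and the identity $\binom{k_m^2}{k_{m+1}}\binom{k_{m+1}}{s}=\binom{k_m^2}{s}\binom{k_m^2-s}{k_{m+1}-s}$ yields the uniform marginal for $\mathcal{T}$. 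Uniformity of $\mathcal{J}$ itself I plan to establish by induction on the tree depth $d=n-m-1$: the base case $d=1$ follows from Proposition~\ref{theoKpairs}, which makes $(K_{m+1,2},K_{m+1,3})$ uniform on $\{1,\ldots,k_{m+1}\}^2$, hence makes $\{K_{m+1,2},K_{m+1,3}\}$ uniform on $2$-subsets once conditioned on distinctness; the inductive step applies the same Vandermonde reduction one level up, using the independence $\mathcal{J}'\ind\mathcal{G}_{m+2}$, where $\mathcal{J}'$ is the set of level-$(m+2)$ indices appearing in $X^n_1$.

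Once $\mathcal{T}$ is uniform, the conditional probability is a direct count: exactly $k_m(k_m-1)\cdots(k_m-2s+1)/s!$ of the $\binom{k_m^2}{s}=k_m^2(k_m^2-1)\cdots(k_m^2-s+1)/s!$ $s$-subsets of $\{1,\ldots,k_m\}^2$ have all $2s$ coordinates pairwise distinct (pick an ordered $2s$-sequence of distinct values, group into $s$ ordered pairs, and divide by $s!$ to unorder). Substituting $s=2^{n-m-1}$ gives the factor in~(\ref{NoRepeatEx}). The main obstacle is the inductive step for uniformity of $\mathcal{J}$: the conditioning event $A_{n,m+1}$ is strictly stronger than $A_{n,m+2}$, so care is needed to verify that the conditional distribution of $\mathcal{J}'$ under $A_{n,m+1}$ remains uniform on $(s/2)$-subsets that are compatible with the subsequent pair-distinctness step, and that the Vandermonde reduction can be threaded through both layers of conditioning.
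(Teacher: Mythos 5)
Your proposal is correct and is essentially the paper's own argument in different clothing: the paper works with the ordered tuple $(D_{m1},\ldots,D_{m,2^{n-m}})$ of level-$m$ indices, shows by downward induction on $m$ that its point probability $p_m$ is constant over distinct tuples and satisfies $p_m=p_{m+1}\binom{k_m^2}{k_{m+1}}^{-1}\binom{k_m^2-2^{n-m-1}}{k_{m+1}-2^{n-m-1}}$, and then multiplies by the falling-factorial count of distinct tuples --- which is exactly your ``unordered set is uniform, then count'' scheme, with the same hypergeometric identity doing the work. The obstacle you flag at the end dissolves: since $\{\mathcal{J}=T\}$ for an $s$-subset $T$ already implies $A_{n,m+1}$, you can compute the \emph{unconditional} $P(\mathcal{J}=T)$ by summing over $\{\mathcal{J}'=T'\}$ (each of which implies $A_{n,m+2}$, so the level-$(m+2)$ inductive hypothesis applies as stated), and the expected number of $(s/2)$-subsets of the random combination $(L_{m+2,j},R_{m+2,j})_{j\le k_{m+2}}$ that pair-partition $T$ equals $\frac{s!}{(s/2)!}\binom{k_{m+1}^2-s/2}{k_{m+2}-s/2}\binom{k_{m+1}^2}{k_{m+2}}^{-1}$, independent of $T$; uniformity under $A_{n,m+1}$ then follows by normalizing over $T$, with no need to thread the stronger conditioning through the lower level.
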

\begin{proof}
  There are no more than $k_m$ distinct blocks $X^m_j$ in block
  $X^n_1$. Thus $P(A_{nm})=0$ for $k_m<2^{n-m}$. Now assume $k_m\ge
  2^{n-m}$.  Introduce random variables $D_{mi}$ such that
  $X^n_1=Y^m_{D_{m1}}\times ...\times Y^m_{D_{m2^{n-m}}}$.  Consider
  probabilities $p_m=P(D_{m1}=d_1,...,D_{m2^{n-m}}=d_{2^{n-m}})$,
  where $d_i$ are distinct. It can be easily shown by induction on
  decreasing $m$ that $p_m$ do not depend on $d_i$ and satisfy
  \begin{align*}
    p_m=p_{m+1}
    \binom{k_m^2}{k_{m+1}}^{-1}\binom{k_m^2-2^{n-m-1}}{k_{m+1}-2^{n-m-1}}
    .
  \end{align*}
  Moreover, since $p_m$ do not depend on $d_i$, we obtain
  $P(A_{nm})=p_m k_m(k_m-1)\ldots(k_m-2^{n-m}+1)$. Hence the claim
  follows.
\end{proof}

\section{Shannon block entropy}
\label{secEntropy}

This section is the last preparatory section. Here we will bound the
Shannon entropies of blocks $X^n_j$ in terms of perplexities $k_n$.
To establish some necessary notation, for random variables $X$, $Y$
and $Z$, where $X$ is discrete whereas $Y$ and $Z$ need not be so,
besides Shannon entropy $H(X)=\sred_P\kwad{-\log P(X)}$, we define
conditional entropy $H(X|Y)=\sred_P\kwad{-\log P(X|Y)}$, mutual
information $I(X;Y):=H(X)-H(X|Y)$, and conditional mutual information
$I(X;Y|Z):=H(X|Z)-H(X|Y,Z)$.  Given these objects, we will bound the
Shannon entropies of blocks of the RHA process.

The first result is a corollary of Proposition \ref{theoKpairs}, which
says that conditional entropy of blocks $X^n_j$ given the entire pool
of admissible blocks of the same length $\mathcal{G}_{\le n}$ is exactly
equal to the logarithm of perplexity.
\begin{proposition}
  \label{theoCondEntropy}
  We have 
  \begin{align}
    \label{CondEntropy}
    H(X^n_j|\mathcal{G}_{\le n})= \log k_n
  \end{align}
  and $I(X^n_j;X^n_{j+1}|\mathcal{G}_{\le n})=0$.
\end{proposition}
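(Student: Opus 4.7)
The plan is to derive both claims directly from Proposition \ref{theoKpairs} by using the fact that, once $\mathcal{G}_{\le n}$ is fixed, the pool $Y^n_1,\ldots,Y^n_{k_n}$ is a deterministic collection of $k_n$ \emph{distinct} blocks, and $X^n_j = Y^n_{K_{nj}}$.

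First I would verify that the map $k \mapsto Y^n_k$ is injective as a function of $\mathcal{G}_{\le n}$: this is built into the construction, since at each level the pairs $(L_{mj},R_{mj})$ are drawn without repetition and $Y^m_j = Y^{m-1}_{L_{mj}} \times Y^{m-1}_{R_{mj}}$, so distinct indices at level $n-1$ combined via distinct pairs yield distinct concatenated blocks at level $n$. An easy induction on $n$ formalizes this. Consequently, given $\mathcal{G}_{\le n}$, the conditional distribution of $X^n_j$ is obtained by pushing forward the distribution of $K_{nj}$ through an injection onto $\{Y^n_1,\ldots,Y^n_{k_n}\}$.

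Next, by Proposition \ref{theoKpairs}, $K_{nj}$ is independent of $\mathcal{G}_{\le n}$ and uniform on $\{1,\ldots,k_n\}$ (take the marginal of (\ref{Kpairs})). Hence, conditional on $\mathcal{G}_{\le n}$, $X^n_j$ is uniformly distributed over the $k_n$ distinct values $Y^n_1,\ldots,Y^n_{k_n}$, giving $H(X^n_j\mid\mathcal{G}_{\le n}) = \log k_n$.

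For the second claim, Proposition \ref{theoKpairs} also gives that $(K_{nj},K_{n,j+1})$ is independent of $\mathcal{G}_{\le n}$ and uniform on $\{1,\ldots,k_n\}^2$. Using the injectivity established above on each coordinate, $(X^n_j,X^n_{j+1})$ is, conditional on $\mathcal{G}_{\le n}$, uniformly distributed on a set of cardinality $k_n^2$, so $H(X^n_j,X^n_{j+1}\mid\mathcal{G}_{\le n}) = 2\log k_n$. Combined with (\ref{CondEntropy}) for both indices $j$ and $j+1$, this yields $I(X^n_j;X^n_{j+1}\mid\mathcal{G}_{\le n}) = 0$. The only mildly subtle point is the injectivity of $k\mapsto Y^n_k$, which is the reason one gets the sharp equality rather than just an upper bound; everything else is a direct translation of Proposition \ref{theoKpairs}.
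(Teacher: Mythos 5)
Your proposal is correct and follows essentially the same route as the paper: both arguments rest on the observation that, given $\mathcal{G}_{\le n}$, the correspondence $K_{nj}\leftrightarrow X^n_j$ is one-to-one, and then invoke Proposition \ref{theoKpairs} to get the uniform (and pairwise independent) distribution of the indices. Your explicit induction establishing the injectivity of $k\mapsto Y^n_k$ is a point the paper leaves implicit (it is asserted in the construction itself), but it is the same proof.
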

\begin{proof}
  Given $\mathcal{G}_{\le n}$, the correspondence between $X^n_j$ and
  $K_{nj}$ is one-to-one. Hence $H(X^n_j|\mathcal{G}_{\le
    n})=H(K_{nj}|\mathcal{G}_{\le n})$.  From Proposition \ref{theoKpairs}
  we further obtain $H(K_{nj}|\mathcal{G}_{\le n})=H(K_{nj})=\log k_n$
  and $H(K_{nj},K_{n,j+1}|\mathcal{G}_{\le
    n})=H(K_{nj})+H(K_{n,j+1})$.
\end{proof}

The second result is an exact expression for the Shannon entropy of the pool
of admissible blocks $\mathcal{G}_{\le n}$, also in term of
perplexities.
\begin{proposition}
  \label{theoEntropyI}
  We have 
  \begin{align}
    \label{EntropyI}
    H(\mathcal{G}_{\le n})=\sum_{l=1}^n \log
    \binom{k_{l-1}^2}{k_l}
    .
  \end{align}
\end{proposition}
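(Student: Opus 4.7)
The plan is to use two standard facts: that the entropy of a collection of independent random variables is the sum of their entropies, and that a uniformly distributed random variable on a finite set of size $N$ has entropy $\log N$.

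First I would decompose $\mathcal{G}_{\le n}$ according to levels. For each $l \in \{1,\ldots,n\}$, let
\begin{align*}
  \mathcal{G}_l = (L_{lj},R_{lj})_{j \in \{1,\ldots,k_l\}}.
\end{align*}
By the construction in Step~1 of the definition of the RHA process, the tuples $\mathcal{G}_1, \mathcal{G}_2, \ldots, \mathcal{G}_n$ are independent across levels (the combination drawn at level $l$ is by assumption independent of all other levels). Consequently
\begin{align*}
  H(\mathcal{G}_{\le n}) = \sum_{l=1}^n H(\mathcal{G}_l).
\end{align*}

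Next I would compute $H(\mathcal{G}_l)$ for a single level. By the definition of the process, $\mathcal{G}_l$ is uniformly distributed on the set of lexicographically sorted sequences of $k_l$ distinct pairs from $\{1,\ldots,k_{l-1}\}^2$. Since requiring lexicographic order is equivalent to specifying an unordered $k_l$-subset of $\{1,\ldots,k_{l-1}\}^2$, the support has size exactly $\binom{k_{l-1}^2}{k_l}$. A uniform variable on a set of this size has entropy $\log \binom{k_{l-1}^2}{k_l}$, giving
\begin{align*}
  H(\mathcal{G}_l) = \log \binom{k_{l-1}^2}{k_l}.
\end{align*}
Substituting this into the previous display yields (\ref{EntropyI}).

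There is no serious obstacle here; the only subtlety worth flagging is the bookkeeping of the lexicographic-ordering convention, which collapses ordered lists of $k_l$ distinct pairs into unordered $k_l$-subsets and thereby replaces a multinomial by the binomial coefficient $\binom{k_{l-1}^2}{k_l}$. Once that identification is made, everything reduces to the independence of the levels and the uniform distribution assumed within each level.
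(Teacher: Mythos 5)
Your proposal is correct and essentially matches the paper's argument: the paper phrases the level-by-level decomposition via the chain rule $H(\mathcal{G}_{\le n})=H(\mathcal{G}_{\le n-1})+H(\mathcal{G}_{\le n}|\mathcal{G}_{\le n-1})$ with $H(\mathcal{G}_{\le n}|\mathcal{G}_{\le n-1})=\log\binom{k_{n-1}^2}{k_n}$, which, given the independence across levels that you invoke, is the same computation as your sum of unconditional entropies of uniform variables. Your remark about the lexicographic-ordering convention identifying sorted lists with $k_l$-subsets is a correct and worthwhile clarification of why the binomial coefficient appears.
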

\begin{proof}
  The claim follows by chain rule $H(\mathcal{G}_{\le
    n})=H(\mathcal{G}_{\le n-1})+H(\mathcal{G}_{\le
    n}|\mathcal{G}_{\le n-1})$ from $H(\mathcal{G}_{\le 0})=0$ and
  $H(\mathcal{G}_{\le n}|\mathcal{G}_{\le n-1})=\log
  \binom{k_{n-1}^2}{k_n}$.
\end{proof}

Combining the above two results, we can provide an upper bound for the
unconditional Shannon entropy of blocks $X^n_j$.
\begin{proposition}
  \label{theoTwoPartEntropy}
  We have 
  \begin{align}
    \label{TwoPartEntropy}
    H(X^n_j)\le \min_{0\le l\le n} \okra{H(\mathcal{G}_{\le l}) +
      2^{n-l}\log k_l}
    .
  \end{align}
\end{proposition}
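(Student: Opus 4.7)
The plan is to prove, for each fixed $l\in\klam{0,1,\ldots,n}$, the single bound
$$H(X^n_j) \le H(\mathcal{G}_{\le l}) + 2^{n-l}\log k_l,$$
from which the conclusion follows by taking the minimum over $l$. The standard entropy chain rule gives
$$H(X^n_j) \le H(X^n_j,\mathcal{G}_{\le l}) = H(\mathcal{G}_{\le l}) + H(X^n_j\mid \mathcal{G}_{\le l}),$$
so it is enough to show $H(X^n_j\mid \mathcal{G}_{\le l})\le 2^{n-l}\log k_l$.

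To obtain this conditional entropy bound, I will unfold the recursive definition $Y^m_i = Y^{m-1}_{L_{mi}}\times Y^{m-1}_{R_{mi}}$ down from level $n$ to level $l$. A straightforward induction on $n-l$ shows that for each $k\in\klam{1,\ldots,k_n}$ the block $Y^n_k$ factors as a concatenation $Y^l_{M_1}\times Y^l_{M_2}\times\cdots\times Y^l_{M_{2^{n-l}}}$ of $2^{n-l}$ level-$l$ blocks, where the indices $M_i\in\klam{1,\ldots,k_l}$ are functions of $k$ together with the pairs $(L_{mj},R_{mj})$ for $l<m\le n$. Specializing to $k=K_{nj}$ and using $X^n_j=Y^n_{K_{nj}}$ writes $X^n_j = Y^l_{M_1}\times\cdots\times Y^l_{M_{2^{n-l}}}$ for some random indices $M_i\in\klam{1,\ldots,k_l}$.

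Once this decomposition is in hand, the conditional entropy bound is immediate. Given $\mathcal{G}_{\le l}$, the pool of admissible level-$l$ blocks $(Y^l_1,\ldots,Y^l_{k_l})$ is fully determined, so $X^n_j$ is a function of $\mathcal{G}_{\le l}$ together with the $2^{n-l}$ indices $M_1,\ldots,M_{2^{n-l}}$. Each $M_i$ takes at most $k_l$ values, so by the standard bound on the entropy of a discrete variable by the logarithm of the support size,
$$H(X^n_j\mid \mathcal{G}_{\le l})\le H(M_1,\ldots,M_{2^{n-l}}\mid \mathcal{G}_{\le l}) \le 2^{n-l}\log k_l.$$
Combined with the chain rule above and minimized over $l$, this yields (\ref{TwoPartEntropy}). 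There is no real obstacle here; the only care needed is in the bookkeeping of the hierarchical unfolding. The boundary cases $l=0$ (where $H(\mathcal{G}_{\le 0})=0$ and the bound reduces to $2^n\log k_0$) and $l=n$ (where one recovers exactly $H(\mathcal{G}_{\le n})+\log k_n$, consistent with Proposition \ref{theoCondEntropy}) both come out as expected.
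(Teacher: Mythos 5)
Your proposal is correct and follows essentially the same route as the paper: the chain rule $H(X^n_j)\le H(\mathcal{G}_{\le l})+H(X^n_j\mid\mathcal{G}_{\le l})$ followed by decomposing $X^n_j$ into $2^{n-l}$ level-$l$ blocks whose indices each contribute at most $\log k_l$. The only cosmetic difference is that the paper gets the per-index bound from the exact uniformity $H(K_{lj}\mid\mathcal{G}_{\le l})=\log k_l$ of Proposition \ref{theoKpairs}, while you use the support-size bound, which is equally valid here.
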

\begin{proof}
  For any $0\le l\le n$ we have $H(X^n_j)\le H(X^n_j,\mathcal{G}_{\le
    l}) = H(X^n_j|\mathcal{G}_{\le l})+H(\mathcal{G}_{\le l})$,
  whereas $H(X^n_j|\mathcal{G}_{\le l})\le
  2^{n-l}H(K_{lj}|\mathcal{G}_{\le l})=2^{n-l}H(K_{lj})=2^{n-l}\log
  k_l$.
\end{proof}

Given Propositions \ref{theoCondEntropy} and \ref{theoTwoPartEntropy}, we
may introduce an important parameter of the RHA process, which we will
call the combinatorial entropy rate.
\begin{definition}
  The combinatorial entropy rate of the RHA process is
  \begin{align}
    h:=\inf_{l\in\mathbb{N}} 2^{-l}\log k_l=
    \lim_{l\rightarrow\infty} 2^{-l}\log k_l
    .
  \end{align}
\end{definition}
\begin{proposition}
  \label{theoEntropyRate}
  We have 
  \begin{align}
    \inf_{n\in\mathbb{N}} 2^{-n}H(X^n_j)=h
    .
  \end{align}
\end{proposition}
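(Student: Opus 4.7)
The plan is to establish the equality by proving matching upper and lower bounds for $\inf_{n} 2^{-n} H(X^n_j)$ and then invoking the monotonicity of $2^{-l}\log k_l$ guaranteed by (\ref{LogSubadditivity}).

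For the lower bound, I would observe that conditioning reduces entropy, so Proposition \ref{theoCondEntropy} gives $H(X^n_j) \ge H(X^n_j \mid \mathcal{G}_{\le n}) = \log k_n$. Dividing by $2^n$ and taking the infimum over $n$ yields $\inf_n 2^{-n} H(X^n_j) \ge \inf_n 2^{-n}\log k_n = h$.

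For the upper bound, I would apply Proposition \ref{theoTwoPartEntropy}: for any fixed $l$ and all $n \ge l$,
\begin{align*}
2^{-n} H(X^n_j) \le 2^{-n} H(\mathcal{G}_{\le l}) + 2^{-l}\log k_l.
\end{align*}
Since $H(\mathcal{G}_{\le l})$ is a finite constant for each fixed $l$, letting $n \to \infty$ along this inequality gives $\liminf_{n} 2^{-n} H(X^n_j) \le 2^{-l} \log k_l$. Taking the infimum over $l$ then yields $\inf_n 2^{-n} H(X^n_j) \le h$.

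Finally, the fact that $h = \inf_l 2^{-l}\log k_l = \lim_{l\to\infty} 2^{-l}\log k_l$ follows because the inequality $k_n \le k_{n-1}^2$ from (\ref{LogSubadditivity}) implies $2^{-n}\log k_n \le 2^{-(n-1)}\log k_{n-1}$, so the sequence is nonincreasing and its infimum coincides with its limit. There is no real obstacle here since both propositions are already available; the proof is a clean sandwich argument whose only subtlety is to keep $l$ fixed while sending $n \to \infty$ in the upper bound so that the $2^{-n} H(\mathcal{G}_{\le l})$ term vanishes.
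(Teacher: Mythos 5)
Your proof is correct and follows essentially the same route as the paper: the lower bound via $H(X^n_j)\ge H(X^n_j|\mathcal{G}_{\le n})=\log k_n$ from Proposition \ref{theoCondEntropy}, and the upper bound by fixing $l$ in Proposition \ref{theoTwoPartEntropy} and letting the $2^{-n}H(\mathcal{G}_{\le l})$ term vanish. The only difference is cosmetic (you phrase the interchange of infima as a limit in $n$ for fixed $l$, and you spell out why the infimum defining $h$ equals the limit, which the paper absorbs into the definition).
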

\begin{proof}
  On the one hand, by Proposition \ref{theoCondEntropy},
  \begin{align*}
    \inf_{n\in\mathbb{N}} 2^{-n}H(X^n_j)\ge \inf_{n\in\mathbb{N}}
    2^{-n} H(X^n_j|\mathcal{G}_{\le n})=\inf_{l\in\mathbb{N}} 2^{-l}\log
    k_l
    .
  \end{align*}
  On the other hand, by Proposition \ref{theoTwoPartEntropy},
  \begin{align*}
    \inf_{n\in\mathbb{N}} 2^{-n}H(X^n_j)\le \inf_{l\in\mathbb{N}}
    \inf_{n\in\mathbb{N}} \okra{2^{-n} H(\mathcal{G}_{\le l}) +
      2^{-l}\log k_l}=\inf_{l\in\mathbb{N}} 2^{-l}\log k_l
    .
  \end{align*}
\end{proof}

Proposition \ref{theoEntropyRate} combined with Proposition
\ref{theoBlockEntropyAMS} yields a bound for the Shannon entropy rate
of the stationary mean of the RHA process.
\begin{proposition}
  \label{theoEntropyRateAMS}
  For the stationary mean $\mu$ of the RHA process, we have
  \begin{align}
    h/2\le h_\mu\le 2h
    .
  \end{align}
\end{proposition}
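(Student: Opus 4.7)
The plan is to combine the sandwich bound of Proposition \ref{theoBlockEntropyAMS} with the combinatorial entropy rate identity of Proposition \ref{theoEntropyRate}, after first upgrading the latter from an infimum to a genuine limit. Since $H_\mu(m)/m$ converges to $h_\mu$ as $m \to \infty$, in particular $h_\mu = \lim_{n\to\infty} H_\mu(2^n)/2^n$, so it suffices to control this subsequence.

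First, I would show that the stronger statement $\lim_{n\to\infty} 2^{-n} H(X^n_j) = h$ holds, not merely $\inf_n 2^{-n} H(X^n_j) = h$. For the $\limsup$ direction, I fix $l$ in Proposition \ref{theoTwoPartEntropy} and let $n \to \infty$: the term $2^{-n} H(\mathcal{G}_{\le l})$ vanishes while $2^{-l} \log k_l$ remains, giving $\limsup_n 2^{-n} H(X^n_j) \le 2^{-l} \log k_l$ for every $l$, hence $\le h$. For the $\liminf$ direction, Proposition \ref{theoCondEntropy} gives $H(X^n_j) \ge H(X^n_j \mid \mathcal{G}_{\le n}) = \log k_n$, and condition (\ref{LogSubadditivity}) forces $2^{-n}\log k_n$ to be non-increasing, so $\lim_n 2^{-n} \log k_n = \inf_n 2^{-n} \log k_n = h$, yielding $\liminf_n 2^{-n} H(X^n_j) \ge h$.

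With the limit established, the sandwich is immediate. For the lower bound, Proposition \ref{theoBlockEntropyAMS} gives $H_\mu(2^n)/2^n \ge H(X^{n-1}_j)/2^n = \tfrac{1}{2} \cdot 2^{-(n-1)} H(X^{n-1}_j)$, and letting $n \to \infty$ yields $h_\mu \ge h/2$. For the upper bound, the same proposition gives
\begin{align*}
\frac{H_\mu(2^n)}{2^n} \le \frac{H(X^{n+1}_j)}{2^n} + \frac{n \log 2}{2^n} = 2 \cdot \frac{H(X^{n+1}_j)}{2^{n+1}} + \frac{n \log 2}{2^n},
\end{align*}
and since the second term vanishes while the first tends to $2h$, we get $h_\mu \le 2h$.

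The only slightly non-routine point is upgrading $\inf = h$ to $\lim = h$; everything else is bookkeeping. This upgrade is essential because the additive error $n \log 2$ in Proposition \ref{theoBlockEntropyAMS} forces us to take $n$ large, so we cannot merely select a good $n$ from the infimum. Fortunately the monotonicity $2^{-n}\log k_n \searrow h$ coming from $k_n \le k_{n-1}^2$ resolves the issue cleanly.
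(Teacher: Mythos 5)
Your proof is correct and follows essentially the same route as the paper, whose entire proof reads ``Divide inequality (\ref{BlockEntropyAMS}) by $2^n$ and take the infimum.'' Your upgrade of Proposition \ref{theoEntropyRate} from $\inf_n 2^{-n}H(X^n_j)=h$ to $\lim_n 2^{-n}H(X^n_j)=h$ (via $\limsup_n 2^{-n}H(X^n_j)\le 2^{-l}\log k_l$ for each fixed $l$, and $2^{-n}H(X^n_j)\ge 2^{-n}\log k_n\ge h$) is exactly the detail the paper's one-line argument leaves implicit, and it is indeed needed to absorb the additive $n\log 2$ term in the upper bound.
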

\begin{proof}
  Divide inequality (\ref{BlockEntropyAMS}) by $2^n$ and take the
  infimum.
\end{proof}
In particular, the combinatorial entropy rate vanishes ($h=0$) if and
only if the Shannon entropy rate of the stationary mean vanishes
($h_\mu=0$) as well. This happens in particular for perplexities
(\ref{HilbergPerplexity}).

Inequality $H(X^n_j)\ge H(X^n_j|\mathcal{G}_{\le n})=\log k_n$ gives a
certain lower bound for the Shannon block entropy of the RHA process. For
perplexities (\ref{HilbergPerplexity}), this lower bound is orders of
magnitude smaller than the upper bound
(\ref{TwoPartEntropy}). Concluding this section we would like to
produce a lower bound which is of comparable order to
(\ref{TwoPartEntropy}).
\begin{proposition}
We have
  \begin{align}
    \label{TwoPartEntropyII}
    H(X^n_j)\ge \max_{0\le l\le n} \okra{\log\binom{k_{l-1}^2}{k_l}
      -\log\binom{k_{l-1}^2-2^{n-l}}{k_l-2^{n-l}}}P(A_{nl})
    ,
  \end{align}
where $P(A_{nl})$ are the probabilities of no repeat (\ref{NoRepeatEx}).
\end{proposition}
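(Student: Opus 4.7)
Since $X^n_j = Y^n_{K_{nj}}$ and $K_{nj}$ is uniform on $\klam{1,\ldots,k_n}$ and independent of $\mathcal{G}_{\le n}$ by Proposition \ref{theoKpairs}, $X^n_j$ has the same joint distribution with $\mathcal{G}_{\le n}$ as $X^n_1$, so it suffices to bound $H(X^n_1)$. The approach is information-theoretic: lower-bound $H(X^n_1)$ by a conditional mutual information that quantifies how much $X^n_1$ reveals about the level-$l$ pool. Writing $\mathcal{G}^{(l)} := (L_{lk},R_{lk})_{k\in\klam{1,\ldots,k_l}}$, I use
\begin{align*}
H(X^n_1) \ge H(X^n_1 \mid \mathcal{G}_{\le l-1}) \ge I(X^n_1;\mathcal{G}^{(l)} \mid \mathcal{G}_{\le l-1}) = H(\mathcal{G}^{(l)}) - H(\mathcal{G}^{(l)} \mid X^n_1,\mathcal{G}_{\le l-1}),
\end{align*}
where the last equality uses the independence of $\mathcal{G}^{(l)}$ and $\mathcal{G}_{\le l-1}$. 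By uniformity of the level-$l$ draw, $H(\mathcal{G}^{(l)}) = \log\binom{k_{l-1}^2}{k_l}$ (cf.\ Proposition \ref{theoEntropyI}), so the task reduces to upper-bounding $H(\mathcal{G}^{(l)} \mid X^n_1,\mathcal{G}_{\le l-1})$.

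The key structural observation is that $A_{nl}$ from (\ref{NoRepeat}) depends only on the string $X^n_1$: it asserts that the $2^{n-l}$ consecutive length-$2^l$ substrings of $X^n_1$ are pairwise distinct. Hence $\boole{A_{nl}}$ is a function of the conditioning variables and contributes no $\log 2$ penalty to the decomposition
\begin{align*}
H(\mathcal{G}^{(l)} \mid X^n_1,\mathcal{G}_{\le l-1}) = P(A_{nl})\,H(\mathcal{G}^{(l)} \mid X^n_1,\mathcal{G}_{\le l-1},A_{nl}) + P(A_{nl}^c)\,H(\mathcal{G}^{(l)} \mid X^n_1,\mathcal{G}_{\le l-1},A_{nl}^c).
\end{align*}
On $A_{nl}^c$, the trivial bound $\log\binom{k_{l-1}^2}{k_l}$ suffices. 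On $A_{nl}$, a short induction on $l$ (from $Y^0_j = j$, using that the pairs drawn at each level are distinct) shows that $Y^{l-1}_1,\ldots,Y^{l-1}_{k_{l-1}}$ are pairwise distinct strings; so conditioning on $\mathcal{G}_{\le l-1}$ permits a unique parsing of $X^n_1$ into its $2^{n-l+1}$ consecutive length-$2^{l-1}$ substrings, and reading off their $Y^{l-1}$-indices and pairing them up yields $2^{n-l}$ ordered pairs from $\klam{1,\ldots,k_{l-1}}^2$ that all lie in $\mathcal{G}^{(l)}$. Under $A_{nl}$ these pairs are distinct, so $\mathcal{G}^{(l)}$ ranges over at most $\binom{k_{l-1}^2-2^{n-l}}{k_l-2^{n-l}}$ size-$k_l$ supersets, bounding the conditional entropy by the logarithm of that count.

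Substituting the two case bounds into the mutual-information identity gives
\begin{align*}
H(X^n_1) \ge P(A_{nl}) \okra{\log\binom{k_{l-1}^2}{k_l} - \log\binom{k_{l-1}^2-2^{n-l}}{k_l-2^{n-l}}},
\end{align*}
and maximizing over admissible $l$ produces (\ref{TwoPartEntropyII}). The main obstacle is the parsing step together with the string-only nature of $A_{nl}$: these jointly allow $\boole{A_{nl}}$ to be conditioned on for free, which is essential for the clean coefficient $P(A_{nl})$ rather than a leaky $P(A_{nl}) - \log 2$; the remaining manipulations are routine identities for conditional mutual information exploiting the independence and uniform draws already catalogued in Propositions \ref{theoKpairs} and \ref{theoEntropyI}.
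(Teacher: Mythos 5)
Your proof is correct and takes essentially the same route as the paper: lower-bounding $H(X^n_j)$ by the conditional mutual information $I(X^n_j;\mathcal{G}_{\le l}\mid\mathcal{G}_{\le l-1})$ and counting the admissible level-$l$ pools on the no-repeat event $A_{nl}$. Your handling of the conditional entropy is in fact more careful than the paper's, which writes the one-sided bound $H(\mathcal{G}_{\le l}\mid\mathcal{G}_{\le l-1},X^n_j)\le P(A_{nl})\log\binom{k_{l-1}^2-2^{n-l}}{k_l-2^{n-l}}$ and silently drops the $P(A_{nl}^c)\log\binom{k_{l-1}^2}{k_l}$ contribution that your two-case decomposition correctly retains and that is exactly what produces the clean coefficient $P(A_{nl})$ in the stated inequality.
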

\begin{proof}
  We have 
  \begin{align*}
    H(X^n_j)\ge I(X^n_j;\mathcal{G}_{\le l}|\mathcal{G}_{\le l-1})=
    H(\mathcal{G}_{\le l}|\mathcal{G}_{\le l-1})-
    H(\mathcal{G}_{\le l}|\mathcal{G}_{\le l-1},X^n_j) 
    .
  \end{align*}
  We have $H(\mathcal{G}_{\le l}|\mathcal{G}_{\le l-1})=\log
  \binom{k_{l-1}^2}{k_l}$. As for $H(\mathcal{G}_{\le
    l}|\mathcal{G}_{\le l-1},X^n_j)$, we may propose the following
  bound. Given $X^n_j$ consisting of $2^{n-l}$ distinct blocks of
  length $2^l$, tuple $(L_{lj},R_{lj})_{j\in\klam{1,...,k_l}}$ may
  assume at most $\binom{k_{l-1}^2-2^{n-l}}{k_l-2^{n-l}}$ distinct
  values. Hence
  \begin{align*}
    H(\mathcal{G}_{\le l}|\mathcal{G}_{\le l-1},X^n_j)
    \le P(A_{nl})\log\binom{k_{l-1}^2-2^{n-l}}{k_l-2^{n-l}}
    ,
  \end{align*}
  from which the claim follows.
\end{proof}

\section{Main result}
\label{secMain}

Now we can demonstrate the main result, which will conclude our paper.
\begin{proof*}{Theorem \ref{theoMain}}
  \begin{enumerate}
  \item For perplexities (\ref{HilbergPerplexity}) the combinatorial
    entropy rate is $h=0$. Hence $h_\mu=0$ by Proposition
    \ref{theoEntropyRateAMS}.
  \item By (\ref{EntropyI}), entropy $H(\mathcal{G}_{\le n})$ can be
    bounded as
    \begin{align*}
      H(\mathcal{G}_{\le n})&=\sum_{l=1}^n \log
      \binom{k_{l-1}^2}{k_l}
      \le \sum_{l=1}^n 2k_l\log k_{l-1}
      \le 2nk_n\log k_n
      .
    \end{align*}
    Hence, from (\ref{TwoPartEntropy}), for $0\le l\le n$ 
    we obtain an upper bound:
    \begin{align*}
      H(X^n_j)&\le \okra{2lk_l + 2^{n-l}}\log k_l 
      .
    \end{align*}
    If we choose $l=\floor{\beta^{-1}\log_2\okra{\frac{n\log 2}{\log
          n}}}$ then for perplexities (\ref{HilbergPerplexity}) we obtain
    \begin{align}
      H(X^n_j)&\le \kwad{2\beta^{-1}\log_2\okra{\frac{n\log 2}{\log n}} 2^{n/\log
          n} + 2^n\okra{\frac{n\log 2}{\log n}}^{-1/\beta}}
      \frac{n\log 2}{\log n}
      \nonumber\\
      \label{HUpper}
      &=\Theta\okra{2^{n}\okra{\frac{\log n}{n}}^{1/\beta-1}}
      .
    \end{align}
    
    On the other hand, from (\ref{TwoPartEntropyII}) and (\ref{NoRepeatEx}),
    for $0\le l\le n$ we have 
    \begin{align*}
      H(X^n_j)&\ge 
      \okra{\log\binom{k_{l-1}^2}{k_l}
        -\log\binom{k_{l-1}^2-2^{n-l}}{k_l-2^{n-l}}}P(A_{nl})
      \\
      &\ge
      2^{n-l}\log\okra{\frac{k_{l-1}^2-2^{n-l}+1}{k_l-2^{n-l}+1}}P(A_{nl})
      ,
    \end{align*}
    where
    \begin{align}
      P(A_{nl})&=
      \prod_{m=l}^{n-1}
      \frac{k_m(k_m-1)\ldots(k_m-2^{n-m}+1)}{k_m^2(k_m^2-1)\ldots(k_m^2-2^{n-m-1}+1)}
      \nonumber
      \\
      &\ge
      \prod_{m=l}^{n-1}
      \okra{\frac{(k_m-2^{n-m}+2)(k_m-2^{n-m}+1)}{k_m^2-2^{n-m-1}+1}}^{2^{n-m-1}}
      \nonumber
      \\
      &\ge
      \okra{\frac{(k_l-2^{n-l}+2)(k_l-2^{n-l}+1)}{k_l^2-2^{n-l-1}+1}}^{\sum_{m=l}^{n-1}
        2^{n-m-1}}
      \nonumber
      \\
      &\ge
      \okra{1-\frac{k_l(2^{n-l+1}-3)+2}{k_l^2-2^{n-l-1}+1}}^{2^{n}}
      \nonumber
      \\
      \label{UpperNoRepeat}
      &\ge
      1-2^{n}\frac{k_l(2^{n-l+1}-3)+2}{k_l^2-2^{n-l-1}+1}
      .
    \end{align}
    If we choose $l=\ceil{\beta^{-1}\log_2(2n)}$ then for perplexities
    (\ref{HilbergPerplexity}) we obtain that
    $k_l>\exp(2n)>2^{2n}$. Hence $P(A_{nl})$ is greater than a certain
    constant $\alpha>0$ and
    \begin{align}
      \label{HLower}
      H(X^n_j)&\ge \alpha 2^n(2n)^{-1/\beta}[2^{1-\beta}-1]2n
      =\Theta\okra{2^{n}\okra{\frac{1}{n}}^{1/\beta-1}}
      .
  \end{align}
  By (\ref{HUpper}) and (\ref{HLower}), from Proposition
  \ref{theoBlockEntropyAMS}, we obtain the desired sandwich bound for
  the entropy of the stationary mean.

\item By Proposition \ref{theoTopEntropyRHA} and Proposition \ref{theoPhiAMS} we
  obtain
  \begin{align*}
    0&=\sred_P\boole{H_{top}(2^m|\mathcal{X})>2\log k_m}
    \\
    &\ge \sred_P\boole{H_{top}(2^m|X^{n+1}_j)>2\log k_m}
    \\
    &\ge \sred_\mu\boole{H_{top}(2^m|\xi_{1:2^n})>2\log k_m}
    .
  \end{align*}
  Hence $\mu$-almost surely $H_{top}(2^m|\xi_{1:\infty})\le 2\log
  k_m=2^{\beta m+1}$, which implies the upper bound
  $H_{top}(m|\xi_{1:\infty})<C_1 m^\beta$ for a certain constant
  $C_1$. From this we obtain the lower bound $L(\xi_{1:m})>C_2(\log
  m)^{1/\beta}$ by Theorem \ref{theoTopEntropyRepetition}.

  As for the converse bounds, we have $L(X^n_1)\ge 2^{l}$ for
  $A_{nl}^c$, where $A_{nl}$ are the events of no repeat
  (\ref{NoRepeat}). Hence by Proposition \ref{theoPhiAMS},
  \begin{align*}
    \sred_\mu\boole{L(\xi_{1:2^n})\ge l}
    \le \sred_P\boole{L(X^{n+1}_l)\ge l}
    \le 1-P(A_{n+1,l})
    .
  \end{align*}
  Now, if we choose $l=\ceil{\beta^{-1}\log_2(2n)}$ then for
  perplexities (\ref{HilbergPerplexity}) we obtain that
  $k_l>\exp(2n)>2^{2n}$. Hence, by (\ref{UpperNoRepeat}),
  $\sum_{n=0}^{\infty}(1-P(A_{n+1,l}))<\infty$. Consequently, by the
  Borel-Cantelli lemma $L(\xi_{1:2^n})< l$ must hold for sufficiently
  large $n$ $\mu$-almost surely.  Thus $L(\xi_{1:m})<C_3(\log
  m)^{1/\beta}$ for sufficiently large $m$.  From this we obtain the
  lower bound $H_{top}(m|\xi_{1:\infty})>C_4 m^\beta$ for sufficiently
  large $m$ by Theorem \ref{theoTopEntropyRepetition}.

\item Denote the random ergodic measure $F=\mu(\cdot|\mathcal{I})$ of
  the stationary mean $\mu$.  The entropy of the shift-invariant
  algebra with respect to $\mu$ may be bounded by mutual information as
  \begin{align*}
    H_\mu(\mathcal{I})=
    \lim_{m\rightarrow\infty} I_\mu(\mathcal{I};\xi_{1:m})
    &=
    \lim_{m\rightarrow\infty}
    \kwad{H_\mu(\xi_{1:m})-H_\mu(\xi_{1:m}|\mathcal{I})}
    \\
    &=
    \lim_{m\rightarrow\infty} \kwad{H_{\mu}(m)-\sred_\mu H_F(m)}
    \\
    &=
    \lim_{m\rightarrow\infty} \kwad{H_{\mu}(m)-\sred_\mu H_{top}(m|\xi_{1:\infty})}
    =
    \infty
    .
  \end{align*}
  Since the entropy of the shift-invariant algebra is strictly
  positive, the measure $\mu$ is nonergodic.
\end{enumerate}
\end{proof*}

\section*{Acknowledgment}

The author wishes to thank Jan Mielniczuk and an anonymous referee for
valuable comments and literature suggestions.

\bibliographystyle{IEEEtran}

\bibliography{0-journals-abbrv,0-publishers-abbrv,ai,mine,tcs,ql,books}

\end{document}